\title{Distributed $k$-Clustering for Data with Heavy Noise}
\newcommand{\etal}{\emph{et al.}\xspace}
\definecolor{lgray}{gray}{0.9}
\newcommand{\bbL}{\mathbb{L}}
\newcommand{\bbZ}{\mathbb{Z}}
\newcommand{\cost}{\mathsf{cost}}
\newcommand{\tldO}{\tilde{O}}
\DeclareMathOperator*{\argmin}{arg\,min}
\newtheorem{mythm}{Theorem}[section]
\newtheorem{mylmm}[mythm]{Lemma}
\newtheorem{mydef}[mythm]{Definition}
\newtheorem{myclm}[mythm]{Claim}
\newcommand{\Z}{\mathbb{Z}}
\newcommand{\R}{\mathbb{R}}
\newcommand{\floor}[1]{{\left\lfloor#1\right\rfloor}}
\newcommand{\ceil}[1]{{\left\lceil#1\right\rceil}}
\newcommand{\set}[1]{{\left\{#1\right\}}}
\newcommand{\linebelow}{
	\vspace*{-5pt}
	
	\rule{\linewidth}{0.1pt}
}
\DeclareMathOperator*{\union}{\bigcup}
\mathchardef\mhyphen="2D
\newcommand{\poly}{\mathrm{poly}}
\newcommand{\ball}{{\mathsf{ball}}}
\newcommand{\seqkzc}{\ensuremath{\mathsf{kzc}}}
\newcommand{\distkzcmulti}{\ensuremath{\mathsf{dist\mhyphen kzc}}}
\newcommand{\seqgr}{\textsf{greedy}\xspace}
\newcommand{\distrr}{\textsf{random-random}\xspace}
\newcommand{\distrc}{\textsf{random-kzc}\xspace}
\newcommand{\MKCWM}{\textsf{MKCWM}\xspace}
\newcommand{\GLZ}{\textsf{GLZ}\xspace}
\newcommand{\GLZz}{\textsf{GLZ-z}\xspace}
\newcommand{\rgreedy}{\textsf{reverse-greedy}\xspace}
\newcommand{\kmeans}{\textsf{k-means}\xspace}
\newcommand{\kmeansmm}{\textsf{k-means$--$}\xspace}
\newcommand{\CAZ}{\textsf{CAZ}\xspace}
\newcommand{\BEL}{\textsf{BEL}\xspace}
\author{
  Xiangyu Guo \\
  University at Buffalo \\
  Buffalo, NY 14260 \\
  \texttt{xiangyug@buffalo.edu}    
  \And
  Shi Li \\
  University at Buffalo \\
  Buffalo, NY 14260 \\
  \texttt{shil@buffalo.edu}
}
\begin{document}

	\maketitle

	\begin{abstract}
	In this paper, we consider the $k$-center/median/means clustering with outliers problems (or the $(k, z)$-center/median/means problems) in the distributed setting.  Most previous distributed algorithms have their communication costs linearly depending on $z$, the number of outliers.  Recently Guha et al.\ \cite{DBLP:conf/spaa/GuhaLZ17} overcame this dependence issue by considering bi-criteria approximation algorithms that output solutions with $2z$ outliers.  For the case where $z$ is large, the extra $z$ outliers discarded by the algorithms might be too large, considering that the data gathering process might be costly. In this paper, we improve the number of outliers to the best possible $(1+\epsilon)z$, while maintaining the $O(1)$-approximation ratio and independence of communication cost on $z$.  The problems we consider include the $(k, z)$-center problem, and $(k, z)$-median/means problems in Euclidean metrics. Implementation of the our algorithm for $(k, z)$-center shows that it outperforms many previous algorithms, both in terms of the communication cost and quality of the output solution. 
\end{abstract}

\section{Introduction}
Clustering is a fundamental problem in unsupervised learning and data analytics.  In many real-life datasets, noises and errors unavoidably exist.  It is known that even a few noisy data points can significantly influence the quality of the clustering results. To address this issue, previous work has considered the clustering with outliers problem, where we are given a number $z$ on the number of outliers, and need to find the optimum clustering where we are allowed to discard $z$ points, under some popular clustering objective such as $k$-center, $k$-median and $k$-means.

Due to the increase in volumes of real-life datasets, and the emergence of modern parallel computation frameworks such as MapReduce and Hadoop, computing a clustering (with or without outliers) in the distributed setting has attracted a lot of attention in recent years. The set of points are partitioned into $m$ parts that are stored on $m$ different machines, who collectively need to compute a good clustering by sending messages to each other.  Often, the time to compute a good solution is dominated by the communications among machines. Many recent papers on distributed clustering have focused on designing $O(1)$-approximation algorithms with small communication cost \cite{DBLP:conf/nips/BalcanEL13, DBLP:conf/nips/MalkomesKCWM15, DBLP:conf/spaa/GuhaLZ17}. 

Most previous algorithms for clustering with outliers have the communication costs linearly depending on $z$, the number of outliers.  Such an algorithm performs poorly when data is very noisy. Consider the scenario where distributed sensory data are collected by a crowd of people equipped with portable sensory devices.  Due to different skill levels of individuals and the quality of devices, it is reasonable to assume that a small constant fraction of the data points are unreliable.  

Recently, Guha et al.\ \cite{DBLP:conf/spaa/GuhaLZ17} overcame the linear dependence issue, by giving distributed $O(1)$-approximation algorithms for $k$-center/median/means with outliers problems with communication cost independent of $z$. However, the solutions produced by their algorithms have $2z$ outliers. Such a solution discards $z$ more points compared to the (unknown) optimum one, which may greatly decrease the efficiency of data usage. Consider an example where a research needs to be conducted using the inliers of a dataset containing 10\% noisy points; a filtering process is needed to  remove the outliers. A solution with $2z$ outliers will only preserve 80\% of data points, as opposed to the promised 90\%. As a result, the quality of the research result may be reduced.

Unfortunately, a simple example (described in the supplementary material) shows that if we need to produce any multiplicatively approximate solution with only $z$ outliers, then the linear dependence on $z$ can not be avoided. We show that, even deciding whether the optimum clustering with $z$ outliers has cost 0 or not, for a dataset distributed on 2 machines, requires a communication cost of $\Omega(z)$ bits. 
Given such a negative result and the positive results of Guha et al.\ \cite{DBLP:conf/spaa/GuhaLZ17}, the following question is interesting from both the practical and theoretical points of view:

\emph{Can we obtain distributed $O(1)$-approximation algorithms for $k$-center/median/means with outliers that have
communication cost independent of $z$ and output solutions with $(1 + \epsilon)z$ outliers, for any $\epsilon > 0$?}

On the practical side, an algorithm discarding $\epsilon z$ additional outliers is acceptable, as the number can be made arbitrarily small, compared to both the promised number $z$ of outliers and the number $n-z$ of inliers. On the theoretical side,  the $(1+\epsilon)$-factor for the number of outliers is the best we can hope for if we are aiming at an $O(1)$-approximation algorithm with communication complexity independent of $z$; thus answering the question in the affirmative can give the tight tradeoff between the number of outliers and the communication cost in terms of $z$.

In this paper, we make progress in answering the above question for many cases.  For the $k$-center objective, we solve the problem completely by giving a $(24(1+\epsilon), 1+\epsilon)$-bicriteria approximation algorithm with communication cost $O\left(\frac{km}{\epsilon}+\frac{m\log\Delta}{\epsilon}\right)$, where $\Delta$ is the aspect ratio of the metric. ($24(1+\epsilon)$ is the approximation ratio, $1+\epsilon$ is the multiplicative factor for the number of outliers our algorithm produces; the formal definition appears later.) For $k$-median/means objective, we give a distributed $(1+\epsilon, 1+\epsilon)$-bicrteria approximation algorithm for the case of Euclidean metrics. The communication complexity of the algorithm is $\textrm{poly}\left(\frac1\epsilon, k, D, m, \log \Delta\right)$, where $D$ is the dimension of the underlying Euclidean metric. (The exact communication complexity is given in Theorem~\ref{thm:kzmedian-main}.) Using dimension reduction techniques, we can assume $D = O(\frac{\log n}{\epsilon^2})$, by incurring a $(1+\epsilon)$-distortion in pairwise distances. So, the setting indeed covers a broad range of applications, given that the term ``$k$-means clustering'' is defined and studied exclusively in the context of Euclidean metrics. The $(1+\epsilon, 1+\epsilon)$-bicriteria approximation ratio comes with a caveat: our algorithm has running time exponential in many parameters such as $\frac 1\epsilon, k, D$ and $m$ (though it has no exponential dependence on $n$ or $z$).


\subsection{Formulation of Problems}
We call the $k$-center (resp. $k$-median and $k$-means) problem with $z$ outliers as the $(k, z)$-center (resp. $(k, z)$-median and $(k, z)$-means)  problem. Formally, we are given a set $P$ of $n$ points that reside in a metric space $d$, two integers $k \geq 1$ and $z \in [0, n]$.  The goal of the problem is to find a set $C$ of $k$ centers and a set $P' \subseteq P$ of $n-z$ points so as to minimize $\max_{p \in P'}d(p, C)$ (resp. $\sum_{p \in P'} d(p, C)$ and $\sum_{p \in P'} d^2(p, C)$), where $d(p, C) = \min_{c \in C}d(p, c)$ is the minimum distance from $p$ to a center in $C$. For all the 3 objectives, given a set $C \subseteq  P$ of $k$ centers, the best set $P'$ can be derived from $P$ by removing the $z$ points $p \in P$ with the largest $d(p, C)$. Thus, we shall only use a set $C$ of $k$ centers to denote a solution to a $(k, z)$-center/median/means instance. The \emph{cost} of a solution $C$ is defined as $\max_{p \in P'} d(p, C)$, $\sum_{p \in P'} d(p, C)$ and $\sum_{p \in P'} d^2(p, C)$ respectively for a $(k, z)$-center, median and means instance, where $P'$ is obtained by applying the optimum strategy.  The $n-z$ points in $P'$ and the $z$ points in $P \setminus P'$ are called \emph{inliers} and \emph{outliers} respectively in the solution.

As is typical in the machine learning literature, we consider general metrics for $(k, z)$-center, and Euclidean metrics for $(k, z)$-median/means. In the $(k, z)$-center problem, we assume that each point $p$ in the metric space $d$ can be described using $O(1)$ words, and given the descriptions of two points $p$ and $q$, one can compute $d(p, q)$ in $O(1)$ time. In this case, the set $C$ of centers must be from $P$ since these are all the points we have.  For $(k, z)$-median/means problem, points in $P$ and centers $C$ are from Euclidean space $\R^D$, and it is not required that $C \subseteq P$. One should treat $D$ as a small number, since dimension reduction techniques can be applied to project points to a lower-dimension space. 

\noindent \textbf{Bi-Criteria Approximation}\quad We say an algorithm for the $(k, z)$-center/median/means problem achieves a bi-criteria approximation ratio (or simply approximation ratio) of $(\alpha, \beta)$, for some $\alpha, \beta \geq 1$, if it outputs a solution with at most $\beta z$ outliers, whose cost is at most $\alpha$ times the cost of the optimum solution with $z$ outliers.

\noindent \textbf{Distributed Clustering}\quad In the distributed setting, the dataset $P$ is split among $m$ machines, where $P_i$ is the set of data points stored on machine $i$. We use $n_i$ to denote $|P_i|$.  Following the communication model of \cite{DBLP:conf/icml/DingLHL16} and \cite{DBLP:conf/spaa/GuhaLZ17},  we assume there is a central coordinator, and communications can only happen between the coordinator and the $m$ machines.  The communication cost is measured in the total number of words sent.  Communications happen in rounds, where in each round, messages are sent between the coordinator and the $m$ machines. A message sent by a party (either the coordinator or some machine) in a round can only depends on the input data given to the party, and the messages received by the party in previous rounds. As is common in most of the previous results, we require the number of rounds used to be small, preferably a small constant. 

Our distributed algorithm needs to output a set $C$ of $k$ centers, as well as an upper bound $L$ on the maximum radius of the generated clusters.  For simplicity, only the coordinator needs to know $C$ and $L$. We do not require the coordinator to output the set of outliers since otherwise the communication cost is forced to be at least $z$. In a typical clustering task, each machine $i$ can figure out the set of outliers in its own dataset $P_i$ based on $C$ and $L$ (1 extra round may be needed for the coordinator to send $C$ and $L$ to all machines). 
	


\subsection{Prior Work} 
In the centralized setting, we know the best possible approximation ratios of $2$ and $3$ \cite{DBLP:conf/soda/CharikarKMN01} for the $k$-center and $(k, z)$-center problems respectively, and thus our understanding in this setting is complete.  There has been a long stream of research on approximation algorithms $k$-median and $k$-means, leading to the current best approximation ratio of $2.675$ \cite{BPRST17} for $k$-median, $9$~\cite{ANSW16} for $k$-means, and $6.357$ for Euclidean $k$-means~\cite{ANSW16}.  The first $O(1)$-approximation algorithm for $(k, z)$-median is given by Chen, \cite{DBLP:conf/soda/Chen08}. Recently, Krishnaswamy et al.\ \cite{KLS18} developed a general framework that gives $O(1)$-approximations for both $(k, z)$-median and $(k, z)$-means.

%
%
Much of the recent work has focused on solving $k$-center/median/means and $(k,z)$-center/median/means problems in the distributed setting \cite{DBLP:conf/kdd/EneIM11, DBLP:conf/nips/BalcanEL13, DBLP:conf/spaa/ImM15, DBLP:conf/nips/MalkomesKCWM15, DBLP:conf/spaa/ImM15, DBLP:conf/nips/MalkomesKCWM15, DBLP:conf/icml/DingLHL16, DBLP:conf/nips/ChenSWZ16, DBLP:conf/spaa/GuhaLZ17, DBLP:journals/corr/abs-1805-09495}.  Many distributed $O(1)$ approximation algorithms with small communication complexity are known for these problems. However, for $(k, z)$-center/median/means problems, most known algorithms have communication complexity linearly depending on $z$, the number of outliers. Guha et al.\ \cite{DBLP:conf/spaa/GuhaLZ17} overcame the dependence issue, by giving $(O(1), 2+\epsilon)$-bicriteria approximation algorithms for all the three objectives. The communication costs of their algorithms are $\tilde O(m/\epsilon + mk)$, where $\tilde O$ hides a logarithmic factor.

\subsection{Our Contributions} 
Our main contributions are in designing $(O(1), 1+\epsilon)$-bicriteria approximation algorithms for the $(k, z)$-center/median/means problems. The algorithm for $(k, z)$-center works for general metrics:
\begin{mythm}
	\label{thm:main}
	There is a $4$-round, distributed algorithm for the $(k, z)$-center problem, that achieves a $(24(1+\epsilon), 1+\epsilon)$-bicriteria approximation and $O\left(\frac{km}{\epsilon}+\frac{m\log\Delta}\epsilon\right)$ communication cost, where $\Delta$ is the aspect ratio of the metric. 
\end{mythm}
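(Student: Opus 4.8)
The plan is to reduce, at the cost of a $1+\epsilon$ factor in the radius, to the problem of testing a \emph{given} target radius $L$, and to solve each such test with $O(1)$ rounds of summarization. After rescaling so the minimum inter-point distance is $1$, the optimal radius $L^*$ lies in $[1,\Delta]$ and is within a $1+\epsilon$ factor of some $L$ in the geometric grid $G=\{(1+\epsilon)^j : j=0,1,\dots,O(\epsilon^{-1}\log\Delta)\}$. The machines first report a few aggregate statistics so the coordinator learns the grid $G$ and $n=\sum_i n_i$; the coordinator then runs, \emph{in parallel over all $L\in G$}, a test $\mathcal T(L)$ that either returns a $(24(1+\epsilon),1+\epsilon)$-solution of cost $O(L)$ or certifies $L<L^*$, and the output is the solution of $\mathcal T(L)$ for the smallest successful $L\in G$. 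Running all tests in parallel (rather than binary-searching) is what keeps the round count a constant; since $L^*$ is approximated to within $1+\epsilon$ by the chosen grid point, this costs only the stated $1+\epsilon$ factor in the radius.

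The test $\mathcal T(L)$ is a two-level ball-carving. In one round, each machine $i$ greedily carves $P_i$ into balls of radius $2L$: repeatedly pick an uncovered point, make it a \emph{representative}, attach to it the number of hitherto-uncovered points within distance $2L$ (its \emph{weight}), and delete them; it then forwards to the coordinator only $O(k/\epsilon)$ of these representatives --- heuristically, the heaviest ones --- with their weights and a few aggregate counters. The coordinator forms the weighted union of the received representatives, carves \emph{that} set once more into balls of radius $2L$ (merging weights) to remove inter-machine redundancy, and runs a centralized bicriteria algorithm for weighted $(k,z')$-center --- e.g.\ the $3$-approximation of Charikar et al.~\cite{DBLP:conf/soda/CharikarKMN01} --- on the resulting $O(km/\epsilon)$-point weighted instance with outlier budget $z'=(1+\epsilon/2)z$. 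If that routine returns $k$ centers $C$ of weighted cost $O(L)$, $\mathcal T(L)$ succeeds and outputs $C$ together with the radius bound $24(1+\epsilon)L$; otherwise it reports $L<L^*$. A final (optional) round broadcasts $C$ and this bound so each machine can mark its own outliers. Rounds used: one for range/count statistics, one for the local carvings, one for the coordinator to request and receive the representatives for the chosen $L^*$, and one broadcast --- four in all; total communication $O(km/\epsilon+m\epsilon^{-1}\log\Delta)$.

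Correctness has two halves. \emph{(Completeness.)} If $L\ge L^*$, fix the optimal centers $C^*$; since every optimal cluster has diameter $\le 2L$, it contributes at most one representative to each machine's carving and at most one point to the coordinator's twice-carved set, so the inlier weight can be covered by $k$ balls of radius $O(L)$ around points of that set; combined with the claim that the representatives \emph{discarded} in the two carving steps carry total weight at most $\epsilon z/2$, this produces a feasible weighted solution with $(1+\epsilon/2)z$ outliers and cost $O(L)$, so the centralized routine succeeds. \emph{(Soundness.)} Translating $C$ back to $P$: any original point that survives as (part of) a heavy representative is, by the triangle inequality through the two $2L$-carvings and the $O(1)$-approximate weighted solution, within $O(L)$ of $C$; the points left uncovered are exactly the $z'$ coordinator-outliers plus the $\le\epsilon z/2$ discarded in carving, for a total of $(1+\epsilon)z$. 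Tracking the radius blow-ups along this chain --- the two carvings, the constant-factor centralized routine, and the $1+\epsilon$ from the grid --- yields a constant approximation; a careful accounting pins the constant to $24(1+\epsilon)$ (which we do not optimize).

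The crux --- and the reason the $2z$ bound of~\cite{DBLP:conf/spaa/GuhaLZ17} is improved all the way to $(1+\epsilon)z$ --- is the discarded-weight claim used in Completeness: one must show that forwarding only $O(k/\epsilon)$ representatives per machine loses at most $\epsilon z$ points \emph{globally}, even though no single machine can identify ``its'' outliers and $z$ is unknown when the machines build their summaries. The structural fact to exploit is that in a feasible instance a radius-$2L$ carving has few representatives that are heavily supported by inliers, so every further representative must be supported mostly by outliers; forwarding representatives in decreasing order of weight then lets one bound the weight dropped on machine $i$ in terms of $i$'s true outlier count, and these bounds sum to $z$. Making this precise --- in particular handling representatives that sit on inliers but are adjacent to clusters of outliers, and choosing the forwarding threshold and the split of the outlier budget into $1+\epsilon/2$ versus $\epsilon/2$ consistently with the parallel search over $L$ --- is the main technical work; a secondary point is to ensure the coordinator's success test is monotone in $L$, so that ``smallest successful $L$'' is well defined and at most $(1+\epsilon)L^*$.
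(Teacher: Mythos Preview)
Your high-level architecture --- guess $L$ on a geometric grid, have each machine produce a small weighted summary, run a centralized $(k,z')$-center routine on the union, and take the smallest successful $L$ --- matches the paper. The gap is exactly where you flag it: the ``discarded-weight claim.'' Your local step carves $P_i$ completely into $2L$-balls and forwards the $O(k/\epsilon)$ \emph{heaviest} representatives, and you justify this by saying ``$z$ is unknown when the machines build their summaries.'' But $z$ is part of the input on every machine, and the paper uses it: the procedure \textsf{aggregating} sets a threshold $y=\frac{\epsilon z}{km}$ and only creates a representative $p$ while some $p\in P_i$ still has $|\ball_U(p,2L)|>y$, then deletes $\ball_U(p,4L)$. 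This single design choice delivers both sides of the crux at once. On one side, a short counting argument (the paper's Lemma~\ref{lemma:Q'-small}) shows machine $i$ produces at most $k+z_i/y$ representatives, so the total is at most $km(1+1/\epsilon)$. On the other side --- the part your sketch never supplies --- the termination condition guarantees that in the leftover set $U_i$ every $2L$-ball has at most $y$ points; hence each optimal $L$-ball contains at most $y=\frac{\epsilon z}{km}$ leftover inliers, and summing over $k$ centers and $m$ machines bounds the discarded inliers by $\epsilon z$.

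Your heaviest-first rule does not yield this bound: a dense pocket of outliers can outweigh a sparse true cluster, so sorting by weight can drop inlier representatives whose mass is not controlled by $z_i$ at all (take one tiny inlier cluster and many size-$s$ outlier clumps with $s$ larger than the inlier cluster). The paper's argument never compares weights; it only needs that whatever remains after aggregation is \emph{uniformly sparse} at scale $2L$, which the threshold enforces directly. A few smaller divergences compound the issue: there is no second carving at the coordinator; the local radii are $2L$ for the density test but $4L$ for deletion (the $4L$ radius is what makes the two-case proof of Lemma~\ref{lemma:Q'-small} work and is also where the $5L$ passed to \seqkzc\ and hence the constant $24=4\cdot 5+4$ come from); and the coordinator's outlier budget is $z'=(1+\epsilon)z+w'(P')-n$, not a fixed $(1+\epsilon/2)z$, precisely to account for the mass already discarded. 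Your monotonicity-in-$L$ concern is not needed: the paper simply runs all $L$'s in parallel and returns the smallest one that does not output ``No.''
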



We give a high-level picture of the algorithm. By guessing, we assume that we know the optimum cost $L^*$ (since we do not know, we need to lose the additive $\frac{m\log\Delta}\epsilon$ term in the communication complexity). In the first round of the algorithm, each machine $i$ will call a procedure called \textsf{aggregating}, on its set $P_i$. This procedure performs two operations. First, it discards some points from $P_i$; second, it moves each of the survived points by a distance of at most $O(1)L^*$.  After the two operations, the points will be \emph{aggregated} at a few locations.  Thus, machine $i$ can send a compact representation of these points to the coordinator: a list of $(p, w'_p)$ pairs, where $p$ is a location and $w'_p$ is the number of points aggregated at $p$.   The coordinator will collect all the data points from all the machines, and run the algorithm of \cite{DBLP:conf/soda/CharikarKMN01} for $(k, z')$-center instance on the collected points, for some suitable $z'$.

To analyze the algorithm, we show that the set $P'$ of points collected by the coordinator well-approximates the original set $P$.  The main lemma is that the total number of non-outliers removed by the aggregation procedure on all machines is at most $\epsilon z$.  This incurs the additive factor of $\epsilon z$ in the number of outliers.  We prove this by showing that inside any ball of radius $L^*$, and for every machine $i \in [m]$, we removed at most $\frac{\epsilon z}{km}$ points in $P_i$. Since the non-outliers are contained in the union of $k$ balls of radius $L^*$, and there are $m$ machines, the total number of removed non-outliers is at most $\epsilon z$.  For each remaining point, we shift it by a distance of $O(1)L^*$, leading to an $O(1)$-loss in the approximation ratio of our algorithm. 

We perform experiments comparing our main algorithm stated in Theorem~\ref{thm:main} with many previous ones on real-world datasets. The results show that it matches the state-of-art method in both solution quality (objective value) and communication cost. We remark that the qualities of solutions are measured w.r.t removing only $z$ outliers. Theoretically, we need $(1+\epsilon)z$ outliers in order to achieve an $O(1)$-approximation ratio and our constant 24 is big. In spite of this, empirical evaluations suggest that the algorithm on real-word datasets performs much better than what can be proved theoretically in the worst case. 

For $(k, z)$-median/means problems, our algorithm works for the Euclidean metric case and has  communication cost depending on the dimension $D$ of the Euclidean space. One can w.l.o.g. assume $D = O(\log n/\epsilon^2)$ by using the dimension reduction technique. Our algorithm is given in the following theorem:
\begin{mythm} \label{thm:kzmedian-main}
	There is a $2$-round, distributed algorithm for the $(k, z)$-median/means problems in $D$-dimensional Euclidean space, that achieves a $(1+\epsilon, 1+\epsilon)$-bicriteria approximation ratio with probability $1-\delta$. The algorithm has communication cost $O\left(\Phi D \cdot \frac{\log (n\Delta/\epsilon)}{\epsilon}\right)$, where $\Delta$ is the aspect ratio of the input points, $\Phi = O\left(\frac{1}{\epsilon^2}(kD + \log\frac1\delta) + mk\right)$ for $(k, z)$-median, and $\Phi = O\left(\frac{1}{\epsilon^4}(kD + \log\frac1\delta)+mk\log\frac{mk}{\delta}\right)$ for $(k, z)$-means.
\end{mythm}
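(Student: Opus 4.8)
The plan is to reduce the problem, in two rounds, to a \emph{single centralized} call of a $(1+\epsilon)$-approximation algorithm for $(k,z')$-median/means run by the coordinator on a small weighted set $S$, where $S$ is a composable, \emph{outlier-aware} coreset of the distributed instance and $z'$ is a slightly perturbed outlier budget. By outlier-aware I mean that $S$ should satisfy, for every $k$-point set $C$ and every outlier budget $z''$ in the relevant range, $\cost_{(k,z'')}(S,C)=(1\pm\epsilon)\,\cost_{(k,z'')}(P,C)$ up to letting the two sides use budgets that differ by an additive $\epsilon z$; such a guarantee simultaneously preserves cost and lets the number of outliers transfer between $S$ and $P$ with only $\epsilon z$ slack. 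Throughout, $(k,z)$-means is developed in parallel with $(k,z)$-median via the approximate triangle inequality for squared distances, the means case paying the extra $\epsilon^{-2}$ factor and the extra $\log(mk/\delta)$ in $\Phi$ through weaker concentration.

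\emph{Round 1 (distributed bi-criteria and scale estimation).} Each machine $i$ runs a centralized $O(1)$-approximate $(k,z)$-median/means algorithm on $P_i$, retains its $O(k)$ centers with the populations of the induced clusters, and sends these $O(k)$ weighted points; the coordinator runs a centralized bi-criteria algorithm on the $O(mk)$ received points to obtain a global $O(k)$-center set $\hat C$ of cost $O(1)\cdot OPT$ after dropping $O(z)$ points, and broadcasts $\hat C$ and a constant-factor estimate $\hat L$ of $OPT$. This costs $O(mkD)$ words, which is the $mk$ summand of $\Phi$ (times $D$).

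\emph{Round 2 (local coresets, merge, solve).} Using $\hat C$ as the reference solution, each machine $i$ builds a local coreset $S_i$ of $P_i$ by the sensitivity/importance-sampling construction for $k$-median/means (assign each point to its nearest $\hat C$-center, sample proportionally to its contribution to the bi-criteria cost, reweight), additionally rounding each sampled point to a grid so it is described in $O(D)$ words of bounded precision; a point whose (squared) distance to $\hat C$ exceeds the threshold $\Theta\!\big(\frac{km\hat L}{\epsilon z}\big)$ --- too far to plausibly be an inlier --- is not sampled but folded into a per-scale exact count. Then $S=\bigcup_i S_i$ has total size $O\!\left(\frac{kD+\log(1/\delta)}{\epsilon^2}\right)$ for median (resp.\ $O\!\left(\frac{kD+\log(1/\delta)}{\epsilon^4}+mk\log\frac{mk}{\delta}\right)$ for means), which is $\Phi$, since coresets compose under union and the distributed construction succeeds with probability $1-\delta$ by a union bound over machines. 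The coordinator sets $z':=(1+\tfrac{\epsilon}{2})z$, runs a centralized $(1+\epsilon)$-approximation for $(k,z')$-median/means on $S$ --- the step whose running time may be exponential in $k,D,m,1/\epsilon$ but is polynomial in $n$ and $z$ --- and outputs the $k$ centers it finds and the induced radius bound $L$. All of the above is run in parallel for the $O(\epsilon^{-1}\log(n\Delta/\epsilon))$ candidate values of $OPT$ (powers of $1+\epsilon$ spanning a $\poly(n\Delta/\epsilon)$ range), which keeps the round count at $2$ while multiplying communication by that factor, giving the claimed $O\!\big(\Phi D\cdot\frac{\log(n\Delta/\epsilon)}{\epsilon}\big)$; the coordinator returns the output for the correct guess.

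\emph{Correctness and the main obstacle.} Given the outlier-aware coreset property, the cost side is routine: chaining the coreset inequality in both directions shows that an optimal $(k,z')$-solution on $S$, lifted back to $P$ and allowed $(1+\epsilon)z$ outliers, has cost $(1+O(\epsilon))\,OPT$, and rescaling $\epsilon$ proves the bicriteria bound. The crux --- and precisely what separates a $(1+\epsilon)z$ from a $2z$ guarantee --- is establishing the outlier-aware property with only $\epsilon z$ discrepancy in the outlier count. Mirroring the per-machine/per-region counting behind Theorem~\ref{thm:main}, I would show that inside any single optimal cluster and on any single machine the reweighting and grid-rounding mislabel at most $\frac{\epsilon z}{\poly(k,m,1/\epsilon)}$ points (an inlier whose weight lands on a coreset point the coordinator discards, or the reverse), by tuning the sampling resolution and the grid side lengths relative to $\hat L$; summing over the $\le k$ optimal clusters and $m$ machines bounds the total discrepancy by $\epsilon z$. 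The far points tagged as ``certainly outliers'' need separate care: their per-scale summarization may lose their descriptions but not their total count, and since every near-optimal solution on $S$ or on $P$ discards all of them, only the exactness of that count is required --- the threshold being chosen so that all but $\epsilon z$ of these points are indeed true outliers. Combining the cost transfer with this outlier-count transfer yields the theorem.
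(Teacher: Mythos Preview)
Your approach is genuinely different from the paper's, and the difference matters because the hard step you isolate --- building a single \emph{outlier-aware} coreset of size $\Phi$ that preserves $(k,z'')$-cost for all $C$ up to $\epsilon z$ outlier slack --- is exactly what the paper \emph{avoids}. The paper's key device is the reformulation (its Lemma~\ref{lemma:reformulate})
\[
\cost_{z'}(C)\;=\;\sup_{L\ge 0}\Bigl(\sum_{p\in P} d_L^\ell(p,C)-z'L^\ell\Bigr),
\]
where $d_L$ is $d$ truncated at $L$. After discretizing $L$ to a set $\bbL$ of size $O\!\left(\epsilon^{-1}\log(n\Delta/\epsilon)\right)$, the outlier problem becomes: find $C$ that is simultaneously good for $|\bbL|$ \emph{ordinary} (no-outlier) $k$-median/means instances, one per truncated metric $d_L$. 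For each such instance the paper invokes the off-the-shelf distributed $\epsilon$-coreset of \cite{DBLP:conf/nips/BalcanEL13} to obtain $(Q_L,w_L)$ of size $\Phi$; the coordinator then optimizes the $\sup$ over $L\in\bbL$ on these coresets and chains the coreset inequalities with Lemma~\ref{lmm:R-to-bbL}. Thus the $\log(n\Delta/\epsilon)/\epsilon$ factor in the communication cost is $|\bbL|$, not a guess of $OPT$, and no outlier-aware coreset is ever constructed.

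Your proposal, by contrast, rests on a claim you do not establish: that sensitivity sampling relative to a bi-criteria reference $\hat C$, together with a per-scale exact count of ``far'' points, yields the outlier-aware property with only $\epsilon z$ outlier discrepancy and total size $\Phi$. Two concrete obstructions. First, potential outliers have high individual sensitivity (any one of them can be the unique survivor among the $z{+}1$ farthest points and then dominate the cost), so plain importance sampling either oversamples them --- blowing up the size beyond $\Phi$ --- or fails to control variance for precisely the center sets that decide which points are outliers; this is why robust coresets require machinery beyond the construction whose size bound you quote for $\Phi$. Second, summarizing far points by a count discards their locations, so for a candidate $C$ that places a center near some of them the costs on $S$ and on $P$ can differ arbitrarily; the coreset inequality must hold for \emph{all} $C$ in order to compare your output $\tilde C$ against $C^*$. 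Your appeal to the per-cluster/per-machine counting behind Theorem~\ref{thm:main} does not transfer: that argument bounds how many \emph{inliers} a deterministic ball-based aggregation discards under the $\max$ objective, which says nothing about how faithfully a random reweighted sample tracks outlier contributions under a \emph{sum} objective. Without an actual proof of the outlier-aware property, your correctness paragraph is circular, and the gap is real.
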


We now give an overview of our algorithm for $(k,z)$-median/means. First, it is not hard to reformulate the objective of the $(k, z)$-median problem as minimizing $\sup_{L\geq 0} \left(\sum_{p \in P} d_L(p, C) - zL\right)$, where $d_L$ is obtained from $d$ by truncating all distances at $L$.  By discretization, we can construct a set $\bbL$ of $O\left(\frac{\log (\Delta n/\epsilon)}{\epsilon}\right)$ interesting values that the $L$ under the superior operator can take. Thus, our goal becomes to find a set $C$, that is simultaneously good for every $k$-median instance defined by $d_L, L \in \bbL$.  Since now we are handling $k$-median instances (without outliers), we can use the communication-efficient algorithm of \cite{DBLP:conf/nips/BalcanEL13} to construct an $\epsilon$-coreset $Q_L$ with weights $w_L$ for every $L \in \bbL$. Roughly speaking, the coreset $Q_L$ is similar to the set $P$ for the task of solving the $k$-median problem under metric $d_L$. The size of each $\epsilon$-coreset $Q_L$ is at most $\Phi$, implying the communication cost stated in the theorem. After collecting all the coresets, the coordinator can approximately solve the optimization problem on them. This will lead to an $(1+O(\epsilon), 1+O(\epsilon))$-bicriteria approximate solution.  The running time of the algorithm, however, is exponential in the total size of the coresets.  The argument can be easily adapted to the $(k, z)$-means setting.

\textbf{Organization}\quad In Section~\ref{sec:kz-center-multiplicative}, we prove Theorem~\ref{thm:main}, by giving the $(24(1+\epsilon), 1+\epsilon)$-approximation algorithm. The empirical evaluations of our algorithm for $(k, z)$-center and the proof of Theorem~\ref{thm:kzmedian-main} are provided in the supplementary material.

\paragraph{Notations}
Throughout the paper, point sets are multi-sets, where each element has its own identity.  By a copy of some point $p$, we mean a point with the same description as $p$ but a different identity. For a set $Q$ of points, a point $p$,  and a radius $r \geq 0$, we define $\ball_Q(p, r) = \set{q \in Q: {d}(p, q) \leq r}$ to be the set of points in $Q$ that have distances at most $r$ to $p$. For a weight vector $w \in \Z_{\geq 0}^Q$ on some set $Q$ of points, and a set $S \subseteq Q$, we use $w(S) = \sum_{p \in S}w_p$ to denote the total weight of points in $S$. 

Throughout the paper, $P$ is always the set of input points. We shall use $d_{\min} = \min_{p, q \in P: d(p, q) > 0} d(p, q)$ and $d_{\max} = \max_{p, q \in P} d(p, q)$ to denote the minimum and maximum non-zero pairwise distance between points in $P$. Let $\Delta = \frac{d_{\max}}{d_{\min}}$ denote the aspect ratio of the metric.

	\section{Distributed $(k, z)$-Center Algorithm with $(1+\epsilon)z$ Outliers}
\label{sec:kz-center-multiplicative}

In this section, we prove Theorem~\ref{thm:main}, by giving the $(24(1+\epsilon), 1+\epsilon)$-approximation algorithm for $(k, z)$-center, with communication cost $O\left(\frac{km}{\epsilon} + \frac{m\log \Delta}{\epsilon}\right)$.  Let $L^*$ be the cost of the optimum $(k, z)$-center solution (which is not given to us). We assume we are given a parameter $L \geq 0$ and our goal is to design a main algorithm with communication cost $O\left(\frac{km}{\epsilon}\right)$, that either returns a $(k, (1+\epsilon)z)$-center solution of cost at most $24L$, or certifies that $L^* > L$.  Notice that $L^* \in \set{0} \cup [d_{\min}/2, d_{\max}]$. We can obtain our $(24(1+\epsilon), 1+\epsilon)$-approximation by using the main algorithm to check $O\left(\frac{\log \Delta}{\epsilon}\right)$ different values of $L$ in parallel, and among all $L$'s that are \emph{not} certified to be less than $L^*$, returning solution correspondent to the smallest such $L$.  A naive implementation requires all the parties to know $d_{\min}$ and $d_{\max}$ in advance; we show in the supplementary material that the requirement can be removed.

In intermediate steps, we may deal with $(k, z)$-center instances where points have integer weights. In this case, the instance is defined as $(Q, w)$, where $Q$ is a set of points, $w \in \Z_{>0}^Q$, and $z$ is an integer between $0$ and $w(Q)=\sum_{q \in Q}w_q$. The instance is equivalent to the instance $\hat Q$, the multi-set where we have $w_q$ copies of each $q \in Q$.

\cite{DBLP:conf/soda/CharikarKMN01} gave a $3$-approximation algorithm for the $(k, z)$-center problem. However, our setting is slightly more general so we can not apply the result directly. We are given a weighted set $Q$ of points that defines the $(k, z)$-center instance. The optimum set $C^*$ of centers, however, can be from the superset $P \supseteq Q$ which is hidden to us. Thus, our algorithm needs output a set $C$ of $k$ centers from $Q$ and compare it against the optimum set $C^*$ of centers from $P$. Notice that by losing a factor of $2$, we can assume centers are in $Q$; this will lead to a $6$-approximation.  Indeed, by applying the framework of \cite{DBLP:conf/soda/CharikarKMN01} more carefully, we can obtain a $4$-approximation for this general setting. We state the result in the following theorem:
\begin{mythm}[\cite{DBLP:conf/soda/CharikarKMN01}]\label{thm:charikar}
	Let $d$ be a metric over the set $P$ of points, $Q \subseteq P$ and $w \in \Z_{>0}^Q$. There is an algorithm $\seqkzc$ (Algorithm~\ref{alg:charikar}) 
 that takes inputs $k, z' \geq 1$, $(Q, w')$ with $|Q| = n'$, the metric $d$ restricted to $Q$, and a real number $L' \geq 0$. In time $O(n'^2)$, the algorithm either outputs a $(k, z')$-center solution $C' \subseteq Q$ to the instance $(Q, w')$ of cost at most $4L'$, or certifies that there is no $(k, z')$-center solution $C^* \subseteq P$ of cost at most $L'$ and outputs ``No''.
\end{mythm}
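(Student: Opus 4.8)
The plan is to realize $\seqkzc$ as the greedy scheme of \cite{DBLP:conf/soda/CharikarKMN01} for $(k,z)$-center, with its two ball radii inflated so as to absorb the fact that the optimal centers may lie in $P\setminus Q$ while our output centers must come from $Q$. For $v\in Q$ call $\ball_Q(v,2L')$ its \emph{inner ball} and $\ball_Q(v,4L')$ its \emph{outer ball}; the radii $2L'$ and $4L'$, rather than the $L'$ and $3L'$ one would use if centers had to come from the input set, are exactly the price of this generality. The algorithm keeps a set $U$ of still-uncovered points, $U_0=Q$, and for $t=1,\dots,k$ it picks $v_t\in Q$ maximizing $w'(\ball_Q(v_t,2L')\cap U_{t-1})$, appends $v_t$ to the output $C'$, and sets $U_t=U_{t-1}\setminus\ball_Q(v_t,4L')$; it returns $C'$ if $w'(U_k)\le z'$ and ``No'' otherwise. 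For the $O(n'^2)$ running time I would precompute the two $n'\times n'$ ball-membership tables, initialize the $n'$ counters $w'(\ball_Q(v,2L')\cap U)$ in $O(n'^2)$ time, and then charge each subsequent counter decrement to the (distinct) point being deleted, so that all decrements over all $k\le n'$ iterations total $O(n'^2)$; each of the $k$ maximizations costs $O(n')$.

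When $C'$ is returned the guarantee is immediate: every point of $Q\setminus U_k$ was deleted by some outer ball and so lies within $4L'$ of the corresponding $v_t$; hence all points at distance $>4L'$ from $C'$ sit in $U_k$, whose total weight is $\le z'$, so $C'$ is a $(k,z')$-center solution of cost $\le 4L'$. For the ``No'' case, assume for contradiction that there is a solution $C^*=\{c_1^*,\dots,c_k^*\}\subseteq P$ of cost $\le L'$; let $I=\{q\in Q:d(q,C^*)\le L'\}$ be its inliers, so $w'(Q\setminus I)\le z'$, and split $I=O_1^*\sqcup\cdots\sqcup O_k^*$ with $O_j^*\subseteq\ball_Q(c_j^*,L')$. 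Two triangle-inequality facts drive everything: (i) for every $q\in O_j^*$ one has $O_j^*\subseteq\ball_Q(q,2L')$, so $q$'s inner ball already covers $q$'s own optimal cluster; and (ii) if the inner ball of any $v\in Q$ contains even one point of $O_j^*$ then $d(c_j^*,v)\le 3L'$ and hence $O_j^*\subseteq\ball_Q(v,4L')$, so $v$'s outer ball swallows the \emph{entire} cluster $O_j^*$.

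The hard part will be proving $w'(U_k)\le z'$ from this, which by the stated dichotomy is the whole content of the ``No''-correctness. Let $D=\{j:O_j^*\cap U_k\ne\emptyset\}$ be the set of optimal clusters that are \emph{not} fully destroyed (if $D=\emptyset$ then $U_k\cap I=\emptyset$ and $w'(U_k)\le w'(Q\setminus I)\le z'$, and we are done), and call an iteration $t$ \emph{productive} if its inner ball meets some point of $\bigcup_j(O_j^*\cap U_{t-1})$ and \emph{wasteful} otherwise. By (ii) a productive iteration destroys an optimal cluster; fixing for each productive $t$ a witness cluster inside $v_t$'s inner ball gives an injection from productive iterations to destroyed clusters (a destroyed cluster stays empty in all later $U_t$'s), and no cluster of $D$ is ever destroyed; hence at most $k-|D|$ iterations are productive and at least $|D|$ are wasteful.

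To close the argument I would bound the weight of deleted outliers. In a wasteful iteration $t$ the inner ball of $v_t$ meets no surviving inlier, so $\ball_Q(v_t,2L')\cap U_{t-1}\subseteq Q\setminus I$; by the greedy choice together with (i) applied to any $q\in O_j^*\cap U_k\subseteq U_{t-1}$ (for each $j\in D$), its weight is at least $\max_{j\in D}w'(O_j^*\cap U_{t-1})\ge\max_{j\in D}w'(O_j^*\cap U_k)$. The deleted sets $\ball_Q(v_t,4L')\cap U_{t-1}$ are pairwise disjoint and each contains the corresponding inner ball, so summing over the $\ge|D|$ wasteful iterations shows that the total weight of deleted \emph{outliers} is at least $|D|\cdot\max_{j\in D}w'(O_j^*\cap U_k)\ge\sum_{j\in D}w'(O_j^*\cap U_k)=w'(U_k\cap I)$. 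Therefore $w'(U_k)=w'(U_k\cap I)+w'(U_k\setminus I)\le(\text{deleted outlier weight})+(\text{surviving outlier weight})=w'(Q\setminus I)\le z'$, contradicting the ``No'' output. The delicate point in this last chain is the bookkeeping: each wasteful iteration must be credited the \emph{full} quantity $\max_{j\in D}w'(O_j^*\cap U_k)$, not an averaged bound — it is exactly this that lets the leftover outlier budget pay off every undestroyed cluster.
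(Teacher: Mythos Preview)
The paper does not actually prove Theorem~\ref{thm:charikar}; it is stated with a citation to \cite{DBLP:conf/soda/CharikarKMN01} and only the algorithm is displayed. The authors remark that ``by applying the framework of \cite{DBLP:conf/soda/CharikarKMN01} more carefully'' one obtains the $4$-approximation in this slightly generalized setting (optimal centers in $P$, output centers in $Q$), but they give no proof in the body or the supplement.

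Your proposal supplies exactly such a proof, and it is correct. The two triangle-inequality facts (i) and (ii) are precisely the reason the radii must be doubled from the original $L',3L'$ to $2L',4L'$: fact (i) needs $2L'$ because the surrogate center $q\in Q$ may be $L'$ away from the true center $c_j^*\in P$, and fact (ii) needs the outer radius $4L'$ for the same reason. Your charging argument---bounding the number of productive iterations by an injection into destroyed clusters, then using the greedy choice to lower-bound each wasteful inner ball by $\max_{j\in D}w'(O_j^*\cap U_k)$, and finally observing that the disjoint deleted-outlier pieces and the surviving outliers together exhaust $Q\setminus I$---is the standard Charikar--Khuller--Mount--Narasimhan analysis, carried out cleanly. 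The running-time sketch (precompute membership tables, maintain counters with total decrement work $O(n'^2)$) is also fine.

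One small cosmetic point: in the injection argument you should note explicitly that the witness cluster $O_{j_t}^*$ for a productive iteration $t$ must have been nonempty in $U_{t-1}$, hence in particular nonempty in $Q$; otherwise the count ``$k-|D|$ destroyed clusters'' could in principle include empty $O_j^*$'s that were never anyone's witness. This does not affect the inequality $\#\text{productive}\le k-|D|$, but it is worth a half-sentence.
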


The main algorithm is $\distkzcmulti$ (Algorithm~\ref{alg:kzcenter}), which calls an important procedure called $\mathsf{aggregating}$ (Algorithm~\ref{alg:aggregating}).  We describe $\mathsf{aggregating}$ and $\distkzcmulti$ in Section~\ref{subsec:aggregation} and \ref{subsec:24-approx} respectively.


\subsection{Aggregating Points}
\label{subsec:aggregation}

The procedure $\mathsf{aggregating}$, as described in Algorithm~\ref{alg:aggregating}, takes as input the set $Q \subseteq P$  of points to be aggregated (which will be some $P_i$ when we actually call the procedure), the guessed optimum cost $L$, and $y \geq 0$, which controls how many points can be removed from $Q$. It returns a set $Q'$ of points obtained from aggregation, along with their weights $w'$. 

\noindent \begin{minipage}[t]{0.53\textwidth}
	\vspace{-20pt}
	\begin{algorithm}[H]
	  \caption{$\seqkzc(k, z', (Q, w'), L')$} \label{alg:charikar}
	  \begin{algorithmic}[1]
	    \STATE $U\leftarrow Q, C' \gets \emptyset$;
	    \STATE \textbf{for} {$i \gets 1$ to $k$} \textbf{do}
	    \STATE \hspace*{\algorithmicindent} $p_i \gets p \in Q$ with largest $w'(\ball_U(p, 2L'))$
	    \STATE \hspace*{\algorithmicindent} $C' \gets C' \cup \set{p_i}$\;
	    \STATE \hspace*{\algorithmicindent} $U\leftarrow U \setminus \ball_{U}(p_i, {4}L')$ \; 
	    \STATE \textbf{if} {$w'(U) > z'$} \textbf{then return} ``No'' \textbf{else} {\bfseries return} $C'$\;
	  \end{algorithmic}
	\end{algorithm}
\end{minipage}\ %
\begin{minipage}[t]{0.47\textwidth}
	\vspace{-20pt}
	\begin{algorithm}[H]
		\caption{$\mathsf{aggregating}(Q, L, y)$} \label{alg:aggregating}
		\begin{algorithmic}[1]
	      \STATE $U\leftarrow Q, Q' \gets \emptyset$;
	      \STATE \textbf{while} {$\exists p \in Q$ with $|\ball_U(p, 2L)| > y$} \textbf{do}
		      \STATE \hspace*{\algorithmicindent} $Q' \gets Q' \cup \set{p}$,
		     $w'_p\leftarrow \left|\ball_{U}(p, {4}L)\right|$ \label{step:aggregating-update}
		      \STATE \hspace*{\algorithmicindent} $U\leftarrow U \setminus \ball_{U}(p, {4}L)$ \; 
	      \STATE \textbf{return} $(Q', w')$
	    \end{algorithmic}
	\end{algorithm}
\end{minipage}

In $\mathsf{aggregating}$, we start from $U = Q$ and $Q' = \emptyset$ and keep removing points from $U$.  In each iteration, we check if there is a $p \in Q$ with $|\ball_U(p, 2L)| \geq y$. If yes, we add $p$ to $Q'$, remove $\ball_U(p, 4L)$ from $U$ and let $w_p$  be the number of points removed.  We repeat thie procedure until such a $p$ can not be found.  We remark that the procedure is very similar to the algorithm $\seqkzc$ (Algorithm~\ref{alg:charikar}) in \cite{DBLP:conf/soda/CharikarKMN01}.


We start from some simple observations about the algorithm.
\begin{myclm}
	\label{claim:aggregating}
	We define $V = \union_{p \in  Q'}\ball_Q(p, {4}L)$ to be the set of points in $Q$ with distance at most ${4}L$ to some point in $Q'$ at the end of Algorithm~\ref{alg:aggregating}. Then,  the following statements hold at the end of the algorithm:
	\begin{enumerate}[itemsep=0pt,topsep=0pt,leftmargin=*]
		\item $U = Q \setminus V$.
		\item $\big|\ball_{U}(p, {2} L)\big| \leq y$ for every $p \in Q$.
		\item There is a function $f: V \to Q'$ such that $d(p, f(p)) \leq {4}L$, $\forall p \in V$, and $w'(q) = |f^{-1}(q)|, \forall q \in Q'$.
	\end{enumerate}
\end{myclm}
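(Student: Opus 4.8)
The plan is to verify the three statements in the order 2, 1, 3: the second is immediate from the loop's exit condition, the first is a short double-inclusion argument that tracks which points leave $U$, and the third builds on the first. For statement 2, observe that the \textbf{while} loop of Algorithm~\ref{alg:aggregating} terminates exactly when no $p \in Q$ satisfies $|\ball_U(p, 2L)| > y$; hence at the end $|\ball_U(p, 2L)| \le y$ for every $p \in Q$, which is what is asserted.

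For statement 1, I would use the fact that $U$ is only ever modified by the step $U \leftarrow U \setminus \ball_U(p, 4L)$, where $p$ is the center just added to $Q'$. Since $\ball_U(p, 4L) \subseteq \ball_Q(p, 4L) \subseteq V$, every point ever deleted from $U$ lies in $V$, so the final $U$ contains $Q \setminus V$. For the reverse inclusion, fix $p' \in V$, say $d(p', q) \le 4L$ for some $q \in Q'$, and consider the iteration that added $q$ to $Q'$: either $p'$ had already been deleted from $U$ before that iteration, or $p'$ is still in $U$ at that moment and therefore lies in $\ball_U(q, 4L)$ and is deleted in that very iteration. In both cases $p' \notin U$ at the end, so $U \cap V = \emptyset$; as $V \subseteq Q$, this gives $U = Q \setminus V$.

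For statement 3, I would define $f$ via the iteration responsible for removing a point. Each point of $Q$ leaves $U$ in at most one iteration (once removed it never returns), and by statement 1 the set of points that ever leave $U$ is exactly $V = Q \setminus U$; so for $p \in V$ we may set $f(p)$ to be the center chosen in the unique iteration during which $p$ is deleted, giving a well-defined map $f : V \to Q'$. If $f(p) = q$, then $p \in \ball_U(q, 4L)$ at the moment of deletion, so $d(p, f(p)) \le 4L$. And in the iteration choosing $q$, the weight $w'_q$ is set to $|\ball_U(q, 4L)|$, which is exactly the number of points deleted in that iteration, i.e.\ $|f^{-1}(q)|$.

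The only point that genuinely requires care — and the closest thing to an obstacle — is that a point $p$ is never added to $Q'$ twice, which is needed so that $Q'$ is a set, $w'$ is a well-defined vector indexed by $Q'$, and the iteration defining $f(p)$ is unique. This holds because, when $p$ is chosen, we delete $\ball_U(p, 4L) \supseteq \ball_U(p, 2L)$ from $U$, after which $\ball_U(p, 2L) = \emptyset$ and stays empty since $U$ only shrinks; as $y \ge 0$, the guard $|\ball_U(p, 2L)| > y$ can never again hold for this $p$. Once this bookkeeping observation is in place, all three parts follow immediately.
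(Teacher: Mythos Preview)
Your proof is correct and follows essentially the same approach as the paper: property~2 from the termination condition, property~1 by tracking which points are removed from $U$, and property~3 by mapping each removed point to the center selected in the iteration that removed it. Your version is simply more carefully spelled out---in particular, the double-inclusion argument for property~1 and the observation that no center is selected twice are details the paper leaves implicit.
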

\begin{proof}
	$U$ is exactly the set of points in $Q$ with distance more than $4L$ to any point in $Q'$ and thus $U  = Q \setminus V$.  Property 2 follows from the termination condition of the algorithm. Property 3 holds by the way we add points to $Q'$ and remove points from $U$. If in some iteration we added $q$ to $Q'$, we can define $f(p) = q$ for every point $p \in \ball_U(p, 4L)$, i.e, every point removed from $U$ in the iteration.
\end{proof}

We think of $U$ as the set of points we discard from $Q$ and $V$ as the set of survived points. We then move each $p \in V$ to $f(p) \in Q'$ and thus $V$ will be aggregated at the set $Q'$ of locations.   The following crucial lemma upper bounds $|Q'|$:
\begin{mylmm}
	\label{lemma:Q'-small}
	Let $\hat z \geq 0$ and assume there is a $(k, \hat z)$-center solution 
	$C^* \subseteq P$ to the instance $Q$ with cost at most ${} L$.
	Then, at the end of Algorithm~\ref{alg:aggregating} we have $|Q'| \leq k + \frac{\hat z}{y}$.
\end{mylmm}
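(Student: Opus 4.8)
The plan is to set up a charging argument against a fixed optimal solution, in the spirit of the analysis of the greedy algorithm of \cite{DBLP:conf/soda/CharikarKMN01} on which $\mathsf{aggregating}$ is patterned. Fix a $(k,\hat z)$-center solution $C^*=\{c^*_1,\dots,c^*_k\}\subseteq P$ for the instance $Q$ of cost at most $L$; it splits $Q$ into the set $G$ of \emph{good} points (those with $d(q,C^*)\le L$) and the set $O=Q\setminus G$ of outliers, with $|O|\le \hat z$. Let $p_1,\dots,p_t$ be the points added to $Q'$, in the order in which the \textbf{while}-loop selects them, so $t=|Q'|$; let $U_i$ be the value of $U$ at the start of iteration $i$; and set $S_i:=\ball_{U_i}(p_i,2L)$, so $|S_i|>y$ by the loop guard. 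The first thing to record is that the $S_i$ are pairwise disjoint: if $i<i'$ and $q\in S_i$, then $d(q,p_i)\le 2L\le 4L$ and $q\in U_i$, so $q$ is deleted from $U$ at the end of iteration $i$, whence $q\notin U_{i'}\supseteq S_{i'}$.

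Next I would partition $[t]$ according to whether $S_i$ meets $G$. Call $i$ \emph{type-O} if $S_i\subseteq O$; since the $S_i$ are disjoint subsets of $O$ and each has size $>y$, there are strictly fewer than $\hat z/y$ type-O indices. Call $i$ \emph{type-G} otherwise, pick any $q_i\in S_i\cap G$, and let $j(i)$ be an index with $d(q_i,c^*_{j(i)})\le L$. The claim is that $j(\cdot)$ is injective on type-G indices, which bounds their number by $k$. Indeed, if $i<i'$ are type-G with $j(i)=j(i')=j$, then from $d(q_i,c^*_j)\le L$, $d(q_{i'},c^*_j)\le L$, and $d(q_i,p_i)\le 2L$ we get
\[
 d(q_{i'},p_i)\le d(q_{i'},c^*_j)+d(c^*_j,q_i)+d(q_i,p_i)\le 4L,
\]
and since $U$ only shrinks we have $q_{i'}\in S_{i'}\subseteq U_{i'}\subseteq U_i$, so $q_{i'}\in\ball_{U_i}(p_i,4L)$ is removed from $U$ at the end of iteration $i$, contradicting $q_{i'}\in U_{i'}$. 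Combining the two bounds yields $|Q'|=t\le k+\hat z/y$; the degenerate situations (e.g.\ $\hat z=0$, or $y$ so large that no type-O index exists) only make the weaker bound $|Q'|\le k$ hold with room to spare.

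The step I expect to carry the weight is the injectivity of $j(\cdot)$: it is precisely where the gap between the selection radius $2L$ and the clearing radius $4L$ is used, since that factor-two slack is what turns ``two type-G witnesses sharing an optimal center'' into ``one of the two witnesses was already deleted.'' The rest---disjointness of the $S_i$ and the outlier count---is routine once this is in hand. I would also double-check that the version of the loop guard in force is ``$|\ball_U(p,2L)|>y$'' rather than ``$\ge y$,'' since the type-O count needs each $|S_i|$ to strictly exceed $y$ in order to conclude that there are fewer than $\hat z/y$ of them.
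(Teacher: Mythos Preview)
Your proof is correct and follows essentially the same charging argument as the paper: both split the iterations into those whose $2L$-ball hits only outliers (bounded by $\hat z/y$ via disjointness) and those that touch an inlier (bounded by $k$ via the $2L$-versus-$4L$ gap). The only cosmetic difference is that the paper phrases the second bound as ``some optimal center's $L$-ball in $U$ changes from nonempty to empty,'' whereas you phrase it as an explicit injection $j(\cdot)$ into $C^*$; these are the same observation.
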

\begin{proof}
	Let $O= Q\setminus \union_{c \in C^*}\ball_Q(c, {} L)$ be the set of outliers according to solution $C^*$. 
	Thus $|O| \leq \hat z$.

	Focus on the moment before we run Step~\ref{step:aggregating-update} in some iteration of $\mathsf{aggregating}$. See Figure~\ref{fig:aggregating} for the two cases we are going to consider. In case (a), every center $c \in \ball_{C^*}(p, 3L)$ has $\ball_{U}(c, {} L) = \emptyset$. In this case, every point $q \in \ball_U(p,{2} L)$ has $d(q, C^*) > L$: if $d(p, c) > {3}L$ for some $c \in C^*$, then $d(q, c) \geq d(p, c) - d(p, q) > 3L - 2L = L$ by triangle inequality; for some $c \in C^*$ with $d(p, c) \leq {3}L$, we have $\ball_U(c, L) = \emptyset$, implying that $d(q, c) > L$ as $q \in U$. Thus, $\ball_{U}(p, {2} L) \subseteq O$. So, Step~\ref{step:aggregating-update} in this iteration will decrease $|O \cap U|$ by at least $|\ball_U(p, 4L)| \geq |\ball_U(p, 2L)| > y$. 
	
	Consider the case (b) where some $c \in \ball_{C^*}(p, 3L)$ has $\ball_{U}(c, {} L) \neq \emptyset$. Then $\ball_U(p, 4L) \supseteq \ball_{U}(c, {} L)$ will be removed from $U$ by Step~\ref{step:aggregating-update} in this iteration. Thus, 
	\begin{enumerate}[topsep=0pt,itemsep=0pt,leftmargin=*]
		\item if case (a) happens, then $|U \cap O|$ is decreased by more than $y$ in this iteration;
		\item otherwise case (b) happens; then for some $c \in C^*$, $\ball_U(c, {} L)$ changes from non-empty to $\emptyset$.
	\end{enumerate}
	The first event can happen for at most $|O|/y \leq \hat z/y$ iterations and the second event can happen for at most $|C^*| \leq k$ iterations. So, $|Q'| \leq k + \hat z/y$. 
\end{proof} \vspace*{-10pt}

\begin{figure}
	\centering
	\noindent \begin{minipage}{0.52\textwidth}
		\includegraphics[width=0.96\textwidth]{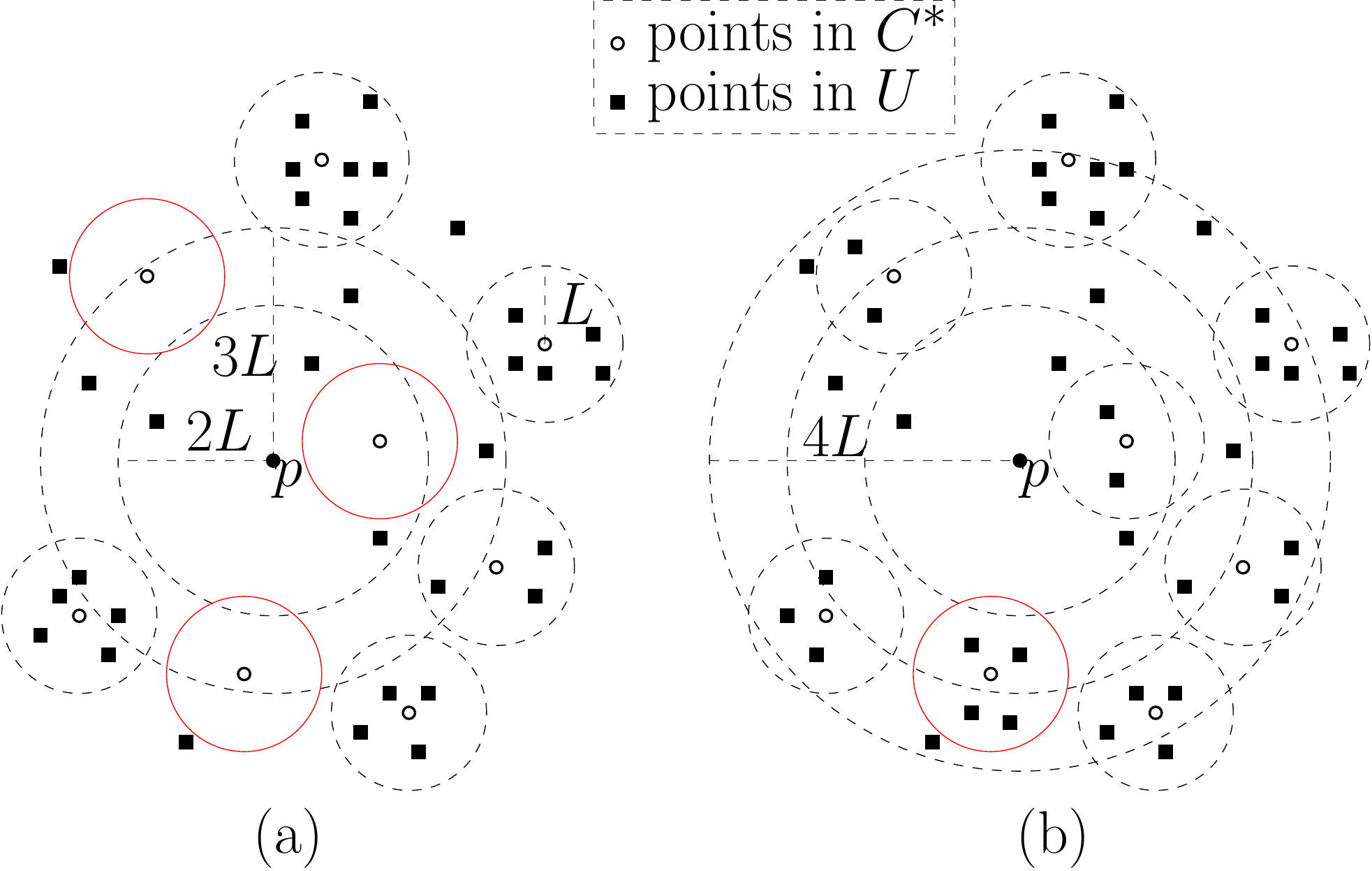}
		\caption{Two cases in proof of Lemma~\ref{lemma:Q'-small}. In Figure (a), the balls $\set{\ball_U(c, L): c \in C^*, d(p, c) \leq 3L}$ (red circles) are all empty. So, $\ball_{U}(p, 2L) \subseteq O$. In Figure (b), there is a non-empty $\ball_U(c, L)$ for some $c \in C^*$ with $d(p, c) \leq 3L$ (the red circle).  The ball is contained in $\ball_U(p, 4L)$.
		}
		\label{fig:aggregating}
	\end{minipage} \quad
	\begin{minipage}{0.44\textwidth}
		 \centering
		 \includegraphics[width=0.96\textwidth]{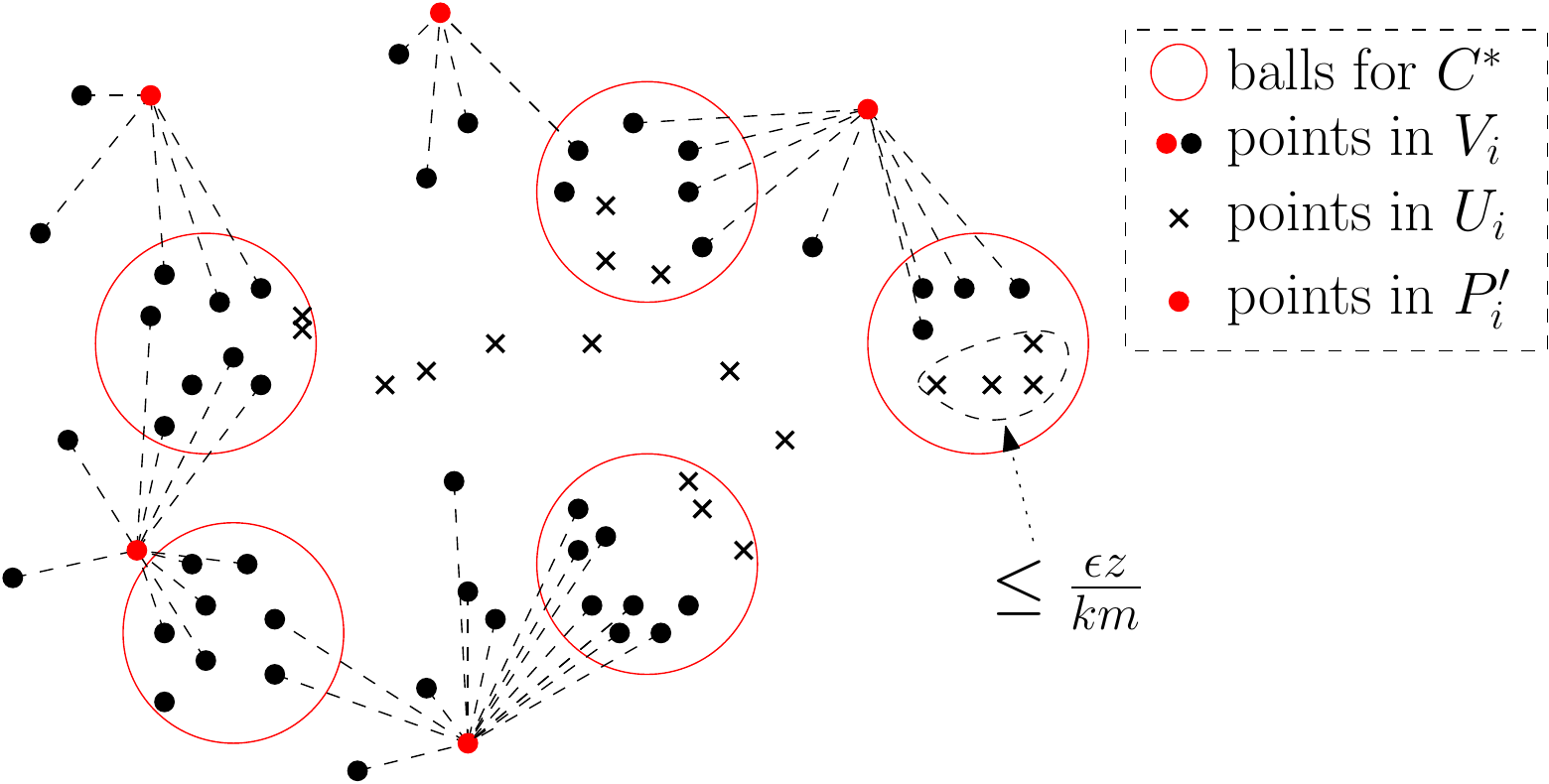}
		 \caption{Illustration for proof of Lemma~\ref{lemma:kzcenter-approx}. 
		 $f_i:V_i \to P'_i$ is indicated by the dashed lines, each of whom is of length at most $4L$. The number of crosses in a circle is at most $\frac{\epsilon z}{km}$.
		 } 
		 \label{fig:aggregating1}
	\end{minipage}
\end{figure} 

\subsection{The Main Algorithm}
\label{subsec:24-approx}
	We are now ready to describe the main algorithm for the $(k, z)$-center problem, given in Algorithm~\ref{alg:kzcenter}. In the first round, each machine will call $\mathsf{aggregating}(P_i, L, \frac{\epsilon z}{km})$ to obtain $(P'_i, w'_i)$.  All the machines will first send their corresponding $|P'_i|$ to the coordinator. In Round 2 the algorithm will check if $\sum_{i \in [m]}|P'_i|$ is small or not. If yes, send a ``Yes'' message to all machines; otherwise return ``No'' and terminate the algorithm. In Round 3, if a machine $i$ received a ``Yes'' message from the coordinator, then it sends the dataset $P'_i$ with the weight vector $w'_i$ to the coordinator.  Finally in Round 4, the coordinator collects all the weighted points $P' = \union_{i \in [m]}P'_i$ and run $\seqkzc$ on these points. 

\begin{algorithm}[ht]
  \caption{$\distkzcmulti$}\label{alg:kzcenter}
  \begin{small}
      \textbf{input on all parties}: $n, k$, $z$, $m$, $L$, $\epsilon$\\
      \textbf{input on machine $i$}: dataset $P_i$ with $|P_i| = n_i$\\
      \textbf{output}: a set $C' \subseteq P$ or ``No'' (which certifies $L^* > L$)  \linebelow
      
     \textbf{Round 1 on  machine $i \in [m]$}
     
     
  	\begin{algorithmic}[1]
  		\STATE $(P'_i, w'_i) \gets \mathsf{aggregating}(P_i, L, \frac{\epsilon z}{km})$\; \label{step:kzcenter-call-aggregating}
		\STATE send $|P'_i|$ to the coordinator\;
	\end{algorithmic}\linebelow
	
	\textbf{Round 2 on the coordinator}
	
	\begin{algorithmic}[1]
		\STATE \textbf{if} $\sum_{i \in [m]}|P'_{i}| > km(1+1/\epsilon)$ \textbf{then} \textbf{return} ``No'' \textbf{else} send ``Yes'' to each machine $i \in [m]$
	\end{algorithmic}\linebelow
	
	\textbf{Round 3 on  machine $i \in [m]$}
	
	\begin{algorithmic}[1]			
		\STATE Upon receiving of a ``Yes'' message from the coordinator, respond by sending $(P'_i, w'_i)$
	\end{algorithmic}\linebelow
	
	\textbf{Round 4 on the coordinator}
	
	\begin{algorithmic}[1]
		\STATE let $P' \gets \union_{i = 1}^mP'_i$ 
		\STATE let $w'$ be the function from $P'$ to $\bbZ_{>0}$ obtained by merging $w'_1, w'_2, \cdots, w'_m$\;
		\STATE let $z' \gets (1+\epsilon) z + w'(P') - n$\;
		\STATE \textbf{if} {$z'  < 0$} \textbf{then return} ``No'' \textbf{else} {\bfseries return} $\seqkzc(k, z', (P', w'), L'=5L)$\; \label{Step:distributed-returning}
	\end{algorithmic}
  \end{small}
\end{algorithm}

An immediate observation about the algorithm is that its communication cost is small:	
\begin{myclm}
	The communication cost of $\distkzcmulti$ is $O(\frac{km}{\epsilon})$.
\end{myclm}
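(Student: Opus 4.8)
The plan is simply to tally the words transmitted in each of the four rounds of $\distkzcmulti$, using the convention (standard in this model, and consistent with the paper's assumption that a point of $P$ occupies $O(1)$ words, with a word holding $\Omega(\log n)$ bits) that a point, an integer in $[0,n]$, and hence any weight $w'_p \le n$, each costs $O(1)$ words to send.

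In Round 1 each machine $i$ performs only local work (the call to $\mathsf{aggregating}$) and then sends the single integer $|P'_i|$; over all $m$ machines this is $O(m)$ words. In Round 2 the coordinator either returns ``No'' (a single broadcast, or nothing at all) or broadcasts ``Yes'' to the $m$ machines, which is again $O(m)$ words. The crucial point is that Round 3 is executed \emph{only} when the coordinator did not return ``No'' in Round 2, i.e.\ only when $\sum_{i\in[m]}|P'_i| \le km(1+1/\epsilon)$; in that event each machine $i$ sends $(P'_i, w'_i)$, i.e.\ $O(|P'_i|)$ words, for a total of $O\big(\sum_{i}|P'_i|\big) = O\big(km(1+1/\epsilon)\big) = O(km/\epsilon)$ words. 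Round 4 runs entirely on the coordinator (assembling $P'$, computing $z'$, and invoking $\seqkzc$), and since only the coordinator is required to know the returned solution $C'$ or ``No'', it generates no further communication.

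Summing over the rounds gives $O(m) + O(m) + O(km/\epsilon) + 0 = O(km/\epsilon)$ words, as claimed. There is no real difficulty in the argument; the one place that needs care is that we must invoke the Round-2 threshold to control the otherwise a~priori unbounded quantity $\sum_i |P'_i|$ (each $|P'_i|$ could be as large as $n_i$ in the worst case). Note that whether this threshold is ever actually \emph{triggered} on a correct guess $L \ge L^*$ --- i.e.\ the correctness of the ``No'' verdict --- is a separate question, resolved by Lemma~\ref{lemma:Q'-small}, and is irrelevant to the communication bound, which holds for every execution regardless.
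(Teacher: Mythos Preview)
Your proof is correct and follows essentially the same approach as the paper: bound Rounds~1--2 by $O(m)$, and observe that Round~3 is executed only after the ``Yes'' message, so its cost is at most $\sum_{i\in[m]}|P'_i| \le km(1+1/\epsilon) = O(km/\epsilon)$. Your additional remarks on word conventions and on Round~4 being purely local are sound elaborations, but the core argument is identical.
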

\vspace*{-15pt}

\begin{proof}
	The total communication cost of Round 1 and Round 2 is $O(m)$.  We run Round 3 only when the coordinator sent the ``Yes'' message, in which case the communication cost is at most $\sum_{i = 1}^m |P'_i| \leq km(1+1/\epsilon) = O(\frac{km}{\epsilon})$.
\end{proof}\vspace*{-5pt}

	It is convenient to define some notations before we make further analysis. For every machine $i \in [m]$, let $P'_i$ be the $P'_i$ constructed in Round 1 on machine $i$. Let $V_i = \union_{p \in P'_i} \ball_{P_i}(p, 4L)$ be the set of points in $P_i$ that are within distance at most $4L$ to some point in $P'_i$. Notice that this is the definition of $V$ in Claim~\ref{claim:aggregating} for the execution of $\mathsf{aggregating}$ on machine $i$. Let $U_i = P_i \setminus V_i$; this is the set $U$ at the end of this execution. Let $f_i$ be the mapping from $V_i$ to $P'_i$ satisfying Property 3 of Claim~\ref{claim:aggregating}. Let $V = \union_{i \in [m]} V_i, P' = \union_{i \in [m]}P'_i$ and $f$ be the function from $V$ to $P'$, obtained by merging $f_1, f_2, \cdots, f_m$. Thus $(p, f(p)) \leq 4L, \forall p \in V$ and $w'(q) = |f^{-1}(q)|, \forall q \in P'$.  
	
\begin{myclm}
	If $\distkzcmulti$ returns a set $C'$, then $C'$ is a $(k, (1+\epsilon)z)$-center solution to the instance $P$ with cost at most $24L$. 
\end{myclm}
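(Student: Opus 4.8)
The plan is to unwind what it means for $\distkzcmulti$ to return a set $C'$. A set is returned only at Step~\ref{Step:distributed-returning}, which means $z' \geq 0$ and $C' = \seqkzc(k, z', (P', w'), 5L)$ did not answer ``No''. So I would first invoke Theorem~\ref{thm:charikar} with $Q = P' \subseteq P$, weight $w'$, outlier count $z'$, and $L' = 5L$: it gives $C' \subseteq P'$ with $|C'| \leq k$ that is a $(k, z')$-center solution to $(P', w')$ of cost at most $4L' = 20L$. I would then unpack the weighted cost: letting $B := \set{q \in P' : d(q, C') > 20L}$ be the ``far'' coreset points, the $w'(B)$ copies of those points in the multiset $\widehat{P'}$ all lie beyond $20L$ from $C'$, hence must be among the $z'$ discarded copies, so $w'(B) \leq z'$.

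Next I would do the weight bookkeeping connecting $(P', w')$ back to $P$. Applying Claim~\ref{claim:aggregating}(3) on each machine, $f : V \to P'$ is a function with $w'(q) = |f^{-1}(q)|$ for all $q \in P'$; summing gives $w'(P') = |V|$, and since $V$ and $U$ partition $P$, we get $|U| = n - w'(P')$. Substituting into $z' := (1+\epsilon)z + w'(P') - n$ yields the key identity $|U| + z' = (1+\epsilon)z$ --- which is exactly why $z'$ is defined that way. I would then declare the outliers of $C'$ on the instance $P$ to be $O := U \cup f^{-1}(B)$. Since $|f^{-1}(B)| = \sum_{q \in B} w'(q) = w'(B) \leq z'$, we get $|O| \leq |U| + z' = (1+\epsilon)z$, so $C'$ discards at most $(1+\epsilon)z$ points.

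For the cost it remains to show $d(p, C') \leq 24L$ for every $p \in P \setminus O$. Such a $p$ is not in $U$, so $p \in V$ and $f(p) \in P'$ is defined with $d(p, f(p)) \leq 4L$ by Claim~\ref{claim:aggregating}(3); and $p \notin f^{-1}(B)$ forces $f(p) \notin B$, i.e.\ $d(f(p), C') \leq 20L$. The triangle inequality gives $d(p, C') \leq 4L + 20L = 24L$. Since the cost of $C'$ as a genuine $(k, (1+\epsilon)z)$-center solution removes the $(1+\epsilon)z$ \emph{farthest} points of $P$, which is no worse than removing $O$, that cost is at most $24L$; together with $C' \subseteq P$ and $|C'| \leq k$ this proves the claim.

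The step I expect to be the (mild) crux is the weight accounting --- establishing $|U| + z' = (1+\epsilon)z$ cleanly and making sure the $(k, z')$-center guarantee of $\seqkzc$ on the \emph{weighted} instance correctly translates to ``$w'(B) \leq z'$'' --- rather than the final triangle inequality, which is a one-liner. The only other things to keep in mind are the trivial edge cases: if $z' \geq w'(P')$ then $(1+\epsilon)z \geq n$ and the outlier bound is vacuous, while $z' < 0$ is already excluded by the check at Step~\ref{Step:distributed-returning}.
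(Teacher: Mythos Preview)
Your proposal is correct and follows essentially the same approach as the paper. The only cosmetic difference is that the paper counts inliers directly (showing $w'\big(\union_{c \in C'}\ball_{P'}(c, 20L)\big) \geq w'(P') - z' = n - (1+\epsilon)z$ and then pulling back through $f^{-1}$ to get $\big|\union_{c \in C'}\ball_P(c, 24L)\big| \geq n - (1+\epsilon)z$), whereas you count outliers by explicitly naming $O = U \cup f^{-1}(B)$ and deriving $|U| + z' = (1+\epsilon)z$; the use of Theorem~\ref{thm:charikar}, the weight identity $w'(P') = |V|$ via Claim~\ref{claim:aggregating}(3), and the $4L + 20L = 24L$ triangle inequality are identical.
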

\begin{proof}
	$C'$ must be returned in Step~\ref{Step:distributed-returning} in Round 4. By Theorem~\ref{thm:charikar} for $\seqkzc$, $C'$ is a $(k, z')$-center solution to $(P', w')$ of cost at most $4 \cdot 5L = 20L$.  That is, $w'\left(P' \setminus \union_{c \in C'}\ball_{P'}(c, 20L)\right) \leq z'$. This implies $w'\left( \union_{c \in C'}\ball_{P'}(c, 20L)\right) \geq w'(P') - z' = n - (1+\epsilon)z$.  Notice that for each  $q \in P'$, the set $f^{-1}(q) \subseteq V \subseteq P$ of points are within distance $4L$ from $q$ and $w'(q) = |f^{-1}(q)|$. So, $\left|\union_{c \in C'}\ball_{P}(c, 24L)\right| \geq n - (1+\epsilon)z$, which is exactly $\left| P \setminus \union_{c \in C'} \ball_P(c, 24L)\right| \leq (1+\epsilon)z$.
\end{proof}

We can now assume $L \geq L^*$  and we need to prove that we must reach Step~\ref{Step:distributed-returning} in Round 4 and return a set $C'$. We define $C^* \subseteq P$ to be a set of size $k$ such that $|P \setminus \union_{c \in C^*}\ball(c, L)| \leq z$. Let $I = \union_{c \in C^*}\ball_P(c, L)$ be the set of  ``inliers'' according to $C^*$ and $O = P \setminus I$ be the set of outliers. Thus, $|I| \geq n - z$ and $|O| \leq z$.

\begin{mylmm}
	After Round 1, we have $\sum_{i \in [m]}|P'_i| \leq km(1+1/\epsilon)$.
\end{mylmm}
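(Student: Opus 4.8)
The plan is to apply Lemma~\ref{lemma:Q'-small} separately on each machine and then sum, using the fact that the optimal outlier set $O$ is partitioned among the machines. Since we are in the case $L \geq L^*$, the set $C^*$ fixed just before this lemma is a set of $k$ centers with $|O| = |P \setminus \bigcup_{c \in C^*}\ball(c,L)| \leq z$. For each machine $i$, let $\hat z_i := |P_i \cap O|$ be the number of points of $P_i$ that are outliers with respect to $C^*$. Then every point of $P_i \setminus O$ is within distance $L$ of $C^*$, so $C^* \subseteq P$ is a $(k, \hat z_i)$-center solution to the instance $P_i$ of cost at most $L$. Moreover, since $P = \bigcup_{i \in [m]} P_i$ is a partition, $\sum_{i \in [m]} \hat z_i = |O| \leq z$.

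Next I would invoke Lemma~\ref{lemma:Q'-small} with $Q = P_i$, $\hat z = \hat z_i$, and $y = \frac{\epsilon z}{km}$ (the value used in the call to $\mathsf{aggregating}$ in Step~\ref{step:kzcenter-call-aggregating}). This gives
\[
  |P'_i| \;\leq\; k + \frac{\hat z_i}{y} \;=\; k + \frac{km}{\epsilon z}\,\hat z_i
\]
for every $i \in [m]$. Summing over all $m$ machines and using $\sum_i \hat z_i \leq z$,
\[
  \sum_{i \in [m]} |P'_i| \;\leq\; km + \frac{km}{\epsilon z}\sum_{i \in [m]}\hat z_i \;\leq\; km + \frac{km}{\epsilon z}\cdot z \;=\; km\Bigl(1 + \frac1\epsilon\Bigr),
\]
which is exactly the claimed bound.

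There is essentially no hard step here; the only thing to be careful about is the bookkeeping that justifies applying Lemma~\ref{lemma:Q'-small} on the sub-instance $P_i$ — namely that $C^*$, although optimal for the global instance $P$, remains a valid (not necessarily optimal) $(k,\hat z_i)$-center solution of cost $\le L$ for $P_i$, with $\hat z_i$ chosen to be precisely $|P_i \cap O|$ so that the $\hat z_i$'s add up to at most $z$. Once that observation is in place, the rest is the one-line summation above.
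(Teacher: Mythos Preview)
Your proof is correct and follows essentially the same approach as the paper: define the per-machine outlier count $\hat z_i = |P_i \cap O|$, observe that $C^*$ is a $(k,\hat z_i)$-center solution of cost $\le L$ for $P_i$, apply Lemma~\ref{lemma:Q'-small} with $y = \epsilon z/(km)$, and sum using $\sum_i \hat z_i \le z$. If anything, you spell out the bookkeeping more carefully than the paper does.
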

\vspace*{-15pt}

\begin{proof}
	Let $z_i = |P_i \cap O| = \Big|P_i \setminus \union_{c \in C^*}\ball_{P_i}(c, L)\Big|$ be the set of outliers in $P_i$. Then, $C^*$ is a $(k, z_i)$-center solution to the instance $P_i$ with cost at most $L$. 
	By Lemma~\ref{lemma:Q'-small}, we have that $|P'_i| \leq k + \frac{z_i}{\epsilon z/(km)}$. So, we have
	\begin{flalign*}
		&&\textstyle \sum_{i \in [m]}|P'_i| \leq km + \frac{km}{\epsilon z}\sum_{i \in [m]}z_i  \leq km\left(1+\frac1\epsilon\right). && \qedhere
	\end{flalign*}
\end{proof} \vspace*{-5pt}

Therefore, the coordinator will not return ``No'' in Round 2. It remains to prove the following Lemma.
\begin{mylmm}\label{lemma:kzcenter-approx}
	Algorithm~\ref{alg:kzcenter} will reach Step~\ref{Step:distributed-returning} in Round 4 and return a set $C'$. 
	
\end{mylmm}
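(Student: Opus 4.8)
The plan is to use the previous lemma (which already guarantees the coordinator sends ``Yes'' in Round~2, so Round~3 and Round~4 are executed) and then establish two things about Round~4: that the integer $z'$ computed there satisfies $z' \ge 0$ (so the coordinator does not return ``No'' before the call to $\seqkzc$), and that $\seqkzc(k, z', (P', w'), 5L)$ does not output ``No''. By Theorem~\ref{thm:charikar} the second point follows as soon as we exhibit one $(k, z')$-center solution $C^* \subseteq P$ for the instance $(P', w')$ of cost at most $5L$; the natural candidate is the set $C^*$ fixed just before the statement, with inlier set $I = \union_{c \in C^*}\ball_P(c, L)$ and outlier set $O = P \setminus I$, $|O| \le z$. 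Two observations drive everything. First, an identity: since $f$ maps $V$ into $P'$ and $w'(q) = |f^{-1}(q)|$ for all $q \in P'$, we get $w'(P') = |V| = \sum_{i\in[m]}|V_i| = \sum_{i\in[m]}(n_i - |U_i|) = n - |U|$, where $U := \union_{i\in[m]} U_i$ (the sets $U_i$ are disjoint, lying in the parts $P_i$); hence $z' = (1+\epsilon)z + w'(P') - n = (1+\epsilon)z - |U|$.

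Second, a counting lemma: $|I \cap U| \le \epsilon z$. This is where the threshold $y = \frac{\epsilon z}{km}$ passed to $\mathsf{aggregating}$ in Step~\ref{step:kzcenter-call-aggregating} is used. Fix $c \in C^*$ and $i \in [m]$; if $\ball_{P_i}(c, L) \cap U_i = \emptyset$ there is nothing to prove, otherwise pick any $p$ in it. Every $q \in \ball_{P_i}(c, L) \cap U_i$ has $d(p, q) \le d(p, c) + d(c, q) \le 2L$ and lies in $U_i$, so $\ball_{P_i}(c, L) \cap U_i \subseteq \ball_{U_i}(p, 2L)$, which has size at most $y$ by Property~2 of Claim~\ref{claim:aggregating} applied to the run of $\mathsf{aggregating}$ on $P_i$ (that property bounds $|\ball_{U_i}(p, 2L)|$ for the final $U_i$ and for \emph{every} $p \in P_i$). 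Since $I$ is covered by the $k$ balls $\ball_P(c, L)$, $c \in C^*$, and $U = \union_{i} U_i$, summing over the $k$ centers and $m$ machines yields $|I \cap U| \le \sum_{c \in C^*}\sum_{i\in[m]}|\ball_{P_i}(c, L) \cap U_i| \le km \cdot \frac{\epsilon z}{km} = \epsilon z$; this is precisely the picture in Figure~\ref{fig:aggregating1}. Therefore $|U| = |I \cap U| + |O \cap U| \le \epsilon z + z = (1+\epsilon)z$, so $z' \ge 0$ and the algorithm reaches the call to $\seqkzc$ in Step~\ref{Step:distributed-returning}.

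Finally I would check that $C^*$ is feasible for $(P', w')$ with cost at most $5L$, i.e.\ $w'\big(P' \setminus \union_{c \in C^*}\ball_{P'}(c, 5L)\big) \le z'$, again via the map $f$. For each $p \in V \cap I$ we have $f(p) \in P'$ with $d(p, f(p)) \le 4L$ (Property~3 of Claim~\ref{claim:aggregating}) and $d(p, C^*) \le L$ (as $p \in I$), hence $d(f(p), C^*) \le 5L$, i.e.\ $f(p) \in \union_{c \in C^*}\ball_{P'}(c, 5L)$. Since $w'(q) = |f^{-1}(q)|$, the total weight of $\union_{c \in C^*}\ball_{P'}(c, 5L)$ equals the number of $p \in V$ with $f(p)$ in that set, which is at least $|V \cap I| = |I| - |I \cap U| \ge (n - z) - \epsilon z = n - (1+\epsilon)z = w'(P') - z'$ by the identity and the counting lemma. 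Hence the weight lying outside the balls is at most $z'$, so $C^* \subseteq P$ is a $(k, z')$-center solution of cost $\le 5L$; by Theorem~\ref{thm:charikar}, $\seqkzc$ (and thus $\distkzcmulti$) returns a set $C'$.

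The only genuinely delicate step is the counting lemma $|I \cap U| \le \epsilon z$: the rest is triangle inequalities and bookkeeping with $w'$ and $f$. The crux there is the matching of scales --- the radius-$L$ balls of $C^*$, once restricted to a single machine, have all pairwise distances within $2L$, exactly the radius in Property~2 of Claim~\ref{claim:aggregating}, which is why the per-(ball, machine) threshold $\frac{\epsilon z}{km}$ is the correct choice and why it multiplies out to $\epsilon z$. A minor loose end is the boundary case $z' = 0$, where $\seqkzc$ is still invoked on a weight-feasible instance and the argument above goes through unchanged.
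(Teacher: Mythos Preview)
Your proof is correct and follows essentially the same route as the paper's: bound $|I\cap U_i|$ per machine via Property~2 of Claim~\ref{claim:aggregating} (using the same pivot-point trick that $\ball_{U_i}(c,L)\subseteq \ball_{U_i}(p,2L)$ for any $p$ in the former), deduce $|V\cap I|\ge n-(1+\epsilon)z$, push through $f$ to get $w'\big(\union_{c\in C^*}\ball_{P'}(c,5L)\big)\ge n-(1+\epsilon)z$, and conclude both $z'\ge 0$ and feasibility of $C^*$ for the $\seqkzc$ call. The only cosmetic difference is that you make the identity $w'(P')=n-|U|$ explicit and obtain $z'\ge 0$ from $|U|\le (1+\epsilon)z$, whereas the paper reads $z'\ge 0$ off the nonnegativity of the outlier weight; both are immediate.
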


\vspace*{-10pt}
 
\begin{proof}
	See Figure~\ref{fig:aggregating1} for the illustration of the proof. By Property 2 of Claim~\ref{claim:aggregating}, we have $\left|\ball_{U_i}(p, 2 L)\right| \leq \frac{\epsilon z}{km}$ for every $p \in U_i$ since $U_i \subseteq P_i$.  This implies that for every $c \in C^*$, we have $\left|\ball_{U_i}(c, L)\right| \leq \frac{\epsilon z}{km}$. (Otherwise, taking an arbitrary $p$ in the ball leads to a contradiction.)
	\begin{align*}
		& \quad |U_i \cap I| = \Big|\union_{c \in C^*}\ball_{U_i}(c, L)\Big| \leq \sum_{c \in C^*}|\ball_{U_i}(c, L)| \leq \sum_{c \in C^*}\frac{\epsilon z}{km} \leq \frac{\epsilon z}{m}, \quad \forall i \in [m].\\
		& \quad \sum_{i \in [m] } |I \cap V_i|
		 = \sum_{i \in [m] }\big(|I \cap P_i| - |I \cap U_i|\big)  \geq \sum_{i \in [m] } \left(|I \cap P_i| - \frac{\epsilon z}{m}\right)  
		                                                      = |I| - \epsilon  z \geq n - (1+\epsilon)z.
	\end{align*}
 
  For every $p \in V \cap I$, $f(p)$ will have distance at most $L + 4L = 5L$ to some center in $C^*$. Also, notice that $w'(q) = |f^{-1}(q)|$ for every $q \in P'$, we have that 
  \begin{align*}
  	\textstyle w'\big(\union_{c \in C^*}\ball_{P'}(c, 5L)\big) \geq 
  	|V \cap I| \geq n - (1+\epsilon) z.
  \end{align*}
  So, $w'(P' \setminus \union_{c \in C^*}\ball_{P'}(c, 5L)) \leq w(P') - n + (1+\epsilon)z = z'$. This implies that $z' \geq 0$, and there is a $(k, z')$-center solution $C^* \subseteq P$ to the instance $(P', w')$ of cost at most $5L$. Thus $\distkzcmulti$ will reach Step~\ref{Step:distributed-returning} in Round 4 and returns a set $C'$. 
  This finishes the proof of the Lemma.
\end{proof}\vspace*{-5pt}
  
We now briefly analyze the running times of algorithms on all parties.  The running time of computing $P'_i$ on each machine $i$ in round 1 is $O(n_i^2)$ and this is the bottleneck for machine $i$. Considering all possible values of $L$, the running time on machine $i$ is $O\left(n_i^2\cdot \frac{\log \Delta}{\epsilon} \right)$. The running time of the round-4 algorithm of the central coordinator for one $L$ will be $O\left(\left(\frac{km}{\epsilon}\right)^2\right)$. We sort all the interesting $L$ values in increasing order. The trick here is to run only \emph{the first two} rounds of the main algorithm. The central coordinator then use binary search to find the smallest $L'$ that the main algorithm sends out ``Yes'' in Round 2. And it proceeds to Round 3 and 4 only for this single $L'$. So, the running time of the central coordinator can be made $O\left(\left(\frac{km}{\epsilon}\right)^2\right)$.

The quadratic dependence of running time of machine $i$ on $n_i$ might be an issue when $n_i$ is big; we discuss how to alleviate the issue in the supplementary material. 
	\section{Conclusion}
In this paper, we give a distributed $(24(1+\epsilon), 1+\epsilon)$-bicriteria approximation for the $(k, z)$-center problem, with communication cost $O\left(\frac{km}{\epsilon}+\frac{m\log \Delta}{\epsilon}\right)$. The running times of the algorithms for all parties are polynomial. We evaluate the algorithm on realworld data sets and it outperforms most previous algorithms, matching the performance of the state-of-art method\cite{DBLP:conf/spaa/GuhaLZ17}.

For the $(k, z)$-median/means problem, we give a distributed $(1+\epsilon, 1+\epsilon)$-bicriteria approximation algorithm with communication cost $O\left(\Phi D\cdot\frac{\log \Delta}{\epsilon}\right)$, where $\Phi$ is the upper bound on the size of the coreset constructed using the algorithm of \cite{DBLP:conf/nips/BalcanEL13}.  The central coordinator needs to solve the optimization problem of finding a solution that is simultaneously good for $O\left(\frac{\log (\Delta n/\epsilon)}{\epsilon}\right)$ $k$-median/means instances.  Since the approximation ratio for this problem will go to \emph{both} factors in the bicriteria ratio, we really need a $(1+\epsilon)$-approximation for the optimization problem. Unfortunately, solving $k$-median/means alone is already APX-hard, and we don't know a heuristic algorithm that works well in practice (e.g, a counterpart to Lloyd's algorithm for $k$-means). It is interesting to study if a different approach can lead to a polynomial time distributed algorithm with $O(1)$-approximation guarantee. 
	
	\bibliographystyle{plain}
	\bibliography{distributed-clustering-with-outliers}
	
	\appendix
	\section{Necessity of Linear Dependence of Communication Cost on $z$ for True Approximation Algorithms}
In this section, we show that if one is aiming for a multiplicative approximation for the $(k, z)$-center, $(k, z)$-median, or $(k,z)$-means problem, then the communication cost is at least $\Omega(z)$ bits, even if there are only 2 machines.   We show that deciding whether the optimum $(k, z)$-center solution has cost $0$ or not requires $\Omega(z)$ bits of communication.  This holds for any combination of values for $n, k$ and $z$ such that $k + z \leq n-1$. Let $B = 1$. The points are all in the real line $\R$. On machine 1, there are $n - z - 2$ copies of points from the set $\set{-1, -2, \cdots, -(k-1)}$, where each one of the $k-1$ points appears either $\floor{\frac{n-z - 2}{k-1}}$ or $\ceil{\frac{n-z - 2}{k-1}}$ times. Notice that each point in the set appears at least once in the set. Meanwhile, machine 1 has a set $A$ of different points in $[2(z+2)]$, and machine 2 has a set $B$ of different points in $[2(z+2)]$, and we have $|A| + |B| = z + 2$.  If $A \cap B \neq \emptyset$, then the cost of the optimum solution is 0. Let  $e \in A \cap B$, then we can discard all points except $e$ from $A$ and $B$. Then we discarded exactly $z$ points and the remaining set of points are at $k-1 + 1 = k$ locations.  On the other hand, if $A \cap B = \emptyset$, then the cost of the optimum solution is not 0. Thus deciding whether the cost is 0 or not requires us to decide if $A \cap B = \emptyset$, which is exactly the \emph{set disjointness} problem. This requires a communication cost of $\Omega(z)$ between machine 1 and machine 2\footnote{This is a well-known result in communication complexity theory, see e.g. \cite{DBLP:journals/toc/HastadW07}}.

\section{Dealing with Various Issues of the Algorithm for $(k, z)$-Center}
In this section, we show how to handle various issues that our $(k, z)$-center algorithm might face.

\textbf{When $d_{\min}$ and $d_{\max}$ are not given.} We can remove the assumption that $d_{\min}$ and $d_{\max}$ are given to us.  Let $d_{\min, i}$ and $d_{\max, i}$ be the minimum and maximum non-zero pairwise distances between points in $P_i$. The crucial observation is that running \textsf{aggregating} on $P_i$ for $L < d_{\min, i}$ is the same as running it for $L = 0$, and running it for $L > d_{\max, i}$ is the same as running it for $L = d_{\max, i}$.  Thus, machine $i$ only needs to consider $L$ values that are integer powers of $1+\epsilon$ inside $[d_{\min, i}, (1+\epsilon) d_{\max, i})$, or $0$, and send results for these $L$ values. Since $d_{\min, i} \geq d_{\min}$ and $d_{\max, i}\geq l_{\max}$, the number of such $L$ values is at most $O\left(\frac{\log \Delta}{\epsilon}\right)$.  Also notice that the data points sent from machine $i$ to the coordinator are all generated from $P_i$.  Thus, the aspect ratio for the union of all points received by the coordinator, is at most $\Delta$. This can guarantee that the coordinator only needs to use $O(\log \frac{\log \Delta}{\epsilon})$ iterations in the binary search step in Round 4.

\textbf{When $\Delta$ is super big.} There are many ways to handle the case when $\Delta$ is super-large. In many applications, we know the nature of the dataset and have a reasonable guess on $L^*$.  In other applications, we may be only interested in the case where $L^* \in [A, B]$: we are happy with any clustering of cost less than $A$ and any clustering of cost more than $B$ is meaningless.  In these applications where we have inside information about the dataset, the number of guesses can be much smaller. Finally, if we allow more rounds in our algorithm, we can use binary search for the whole algorithm $\distkzcmulti$, not just inside Round 4. We only need to run the algorithm for $O\left(\log\frac{\log\Delta}{\epsilon}\right)$ iterations; this will increase the number of rounds to $O\left(\log\frac{\log\Delta}{\epsilon}\right)$, but decrease the communication cost to $O\left(\frac{km}{\epsilon}+m\log\frac{\log \Delta}{\epsilon}\right)$.

\textbf{Handling the Quadratic Running Time of Round 1 on Machine $i$.} In Round 1 of the algorithm $\distkzcmulti$, each machine $i$ needs to run $\mathsf{aggregating}$ on $n_i = |P_i|$ points, leading to a running time of order $O(n_i^2)$.  In cases where $n_i$ is large, the algorithm might be slow. 
%
%
We can decrease the running time, at the price of increasing the communication cost and the running time on the coordinator. We view each $i \in [m]$ as a collection of $t_i \geq 1$ sub-machines, for some integer $t_i  \in [1, n_i]$. Then, we run $\distkzcmulti$ on the set of $\sum_{i \in [m]}t_i$ sub-machines, instead of the original set of $m$ machines. The communication cost of the algorithm $\distkzcmulti$ increases to $O\left(\frac{k\sum_{i \in [m]} t_i}{\epsilon}\cdot \frac{\log \Delta}{\epsilon}\right)$, and the running time on each machine $i$ decreases to $O\left(\frac {n_i}{t_i})^2\cdot t_i\cdot \frac{\log \Delta}{\epsilon}\right) = O\left(\frac{n_i^2}{t_i}\cdot\frac{\log \Delta}{\epsilon}\right)$, and the running time of the algorithm for the coordinator becomes $O\left(\left(\frac{k\sum_{i\in[m]}t_i}{\epsilon}\right)^2\cdot\log\frac{\log \Delta}{\epsilon}\right)$.  Each machine $i$ can choose a $t_i$ so that the $O\left(\frac{n_i^2}{t_i}\cdot\frac{\log \Delta}{\epsilon}\right)$-time algorithm of Round 1 terminates in acceptable amount of time. 

\section{Distributed Algorithms $(k,z)$-Median/Means}\label{sec:app-kzmedian}

In this section, we give our distributed algorithm for the $(k, z)$-median/means problems in Euclidean metrics.  Let $m, k, z, \epsilon, n, P \subseteq \R^D$ and $\set{P_i}_{i \in [m]}$ be as defined in the problem setting.  Let  $\delta > 0$ be the confidence parameter; i.e, our algorithm needs to succeed with probability $1-\delta$.  Also,  we define a parameter $\ell \in \{1, 2\}$ to indicate whether the problem we are considering is $(k, z)$-median ($\ell = 1$) or $(k, z)$-means ($\ell=2$). 

Recall that $d_{\min}$ and $d_{\max}$ are respectively the minimum and maximum non-zero pairwise distance between points in $P$. It is not hard to see that the optimum solution to the instance has cost either $0$ or at least $d^\ell_{\min}/\ell$. For a technical reason, we can redefine $d(p, q)$ as follows for every $p, q \in \R^D$: 
\begin{align*}
	d(p, q) = \begin{cases}
		0 & \text{ if } \|p -q\|_2 = 0 \\
		\min\Big\{\max\{\|p-q\|, \epsilon d_{\min}/(2n)\}, 2d_{\max} \Big\}& \text{ otherwise}
	\end{cases}.
\end{align*}
That is, we truncate distances below at $\epsilon d_{\min}/(2n)$, and above at $2d_{\max}$. It is easy to see that the problem w.r.t the new metric is equivalent to the original one up to a multiplicative factor of $1+\epsilon$. In the new instance, we have either $d(p, q) = 0$ or $d(p, q) \in [\epsilon d_{\min}/(2n), 2d_{\max}]$.

Given an integer $z' \in [0, n)$ and a set $C$ of $k$ centers,  we define 
\begin{align*}
	\cost_{z'}(C) := \min_{P' \subseteq P:|P'| = n - z'} \sum_{p \in P'} d^\ell(p, C)
\end{align*}
to be the cost of the solution $C$ to the $(k, z)$-median/mean instance defined by $P, d$ and $z'$.  In the above definition, we remove $z'$ outliers and consider the cost incurred by the $n-z'$ non-outliers.  Notice the set $P'$ that minimizes the cost is  the set of $n-z'$ points in $P$ that are closest to $C$.  

For some technical reason, we need to allow $z'$ to take real values in $[0, n)$. In this case, we define
\begin{align*}
	\cost_{z'}(C) := \min_{w' \in [0, 1]^P:w'(P) = n-z'} \sum_{p \in P} w'_pd^\ell(p, C).
\end{align*}
Given a set $C$ of $k$ centers, the optimum $w'$ can be obtained in a greedy manner: assign 1 to the $n-\ceil{z'}$ points in $P$ that are closest to $C$, assign $\ceil{z'} - z'$ to the point in $P$ that is the $n-\ceil{z'} + 1$-th closest to $C$, and assign $0$ to the remaining points.

\subsection{The $(k,z)$-Median/Means Problem Reformulated}  
In this section, we reformulate the $(k, z)$-median/means problems in a way that will be useful for our algorithm design.  
%
%
Given a threshold $L \geq 0$, we define
 $d_L(p, q)=\min\{d(p, q), L\}$ for every two points $p, q \in \R^D$. In other words, $d_L$ is the metric $d$ with distances truncated at $L$.  The following crucial lemma gives the reformulations of $k$-median/means problems:
\begin{mylmm} \label{lemma:reformulate}
	For any real number $z' \in [0, n)$, and any set $C$ of $k$ centers, we have 
	\begin{equation}
	  \cost_{z'}(C) = \sup_{L\geq 0}\big(\sum_{p \in P}d^\ell_L(p, C) - z'L^{\ell}\big).  \label{eq:cost-outlier}
	\end{equation}
	Moreover, the superior is achieved when $L$ is the $(n-\floor{z'})$-th smallest number in the multi-set $\set{d(p, C):p \in P}$.
\end{mylmm}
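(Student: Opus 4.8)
The plan is to prove the identity \eqref{eq:cost-outlier} by analyzing the inner expression $g(L) := \sum_{p \in P} d^\ell_L(p, C) - z' L^\ell$ as a function of $L \geq 0$, and showing that its supremum equals $\cost_{z'}(C)$. I would first fix the set $C$ of centers and reorder the points of $P$ as $p_1, p_2, \dots, p_n$ so that $d(p_1, C) \le d(p_2, C) \le \dots \le d(p_n, C)$; write $a_j := d(p_j, C)$. By the greedy description of the optimal fractional weight vector $w'$ given just before this subsection, $\cost_{z'}(C) = \sum_{j=1}^{n - \ceil{z'}} a_j^\ell + (\ceil{z'} - z') a_{n - \ceil{z'} + 1}^\ell$; I will want to re-express this in a form that matches $g$ evaluated at a specific $L$.

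The key step is to evaluate $g$ at $L = L_0 := a_{n - \floor{z'}}$, the $(n - \floor{z'})$-th smallest value in $\set{d(p,C):p\in P}$. At this $L_0$, the points $p_1, \dots, p_{n-\floor{z'}}$ contribute $a_j^\ell$ each (since $a_j \le L_0$), and the points $p_{n - \floor{z'}+1}, \dots, p_n$ — there are $\floor{z'}$ of them — each contribute $L_0^\ell$ (since $a_j \ge L_0$); actually one must be slightly careful with ties, but ties only help and can be handled by noting the contribution of any point with $a_j = L_0$ is $a_j^\ell = L_0^\ell$ either way. Hence $g(L_0) = \sum_{j=1}^{n - \floor{z'}} a_j^\ell + \floor{z'} L_0^\ell - z' L_0^\ell = \sum_{j=1}^{n-\floor{z'}} a_j^\ell - (z' - \floor{z'}) L_0^\ell = \sum_{j=1}^{n - \ceil{z'}} a_j^\ell + (1 - (z' - \floor{z'})) a_{n-\floor{z'}}^\ell = \cost_{z'}(C)$ when $z'$ is not an integer (using $n - \floor{z'} = n - \ceil{z'} + 1$ and $\ceil{z'} - z' = 1 - (z'-\floor{z'})$); the integer case is the easy special case where the last term vanishes. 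This shows $\sup_{L \ge 0} g(L) \ge \cost_{z'}(C)$ and identifies the maximizer claimed in the lemma.

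For the reverse inequality $\sup_{L\ge 0} g(L) \le \cost_{z'}(C)$, I would argue that for an arbitrary $L \ge 0$, $g(L) = \sum_{p} \min\{d^\ell(p,C), L^\ell\} - z' L^\ell$, and compare this term-by-term with the fractional assignment: for any $w' \in [0,1]^P$ with $w'(P) = n - z'$, we have $\sum_p w'_p d^\ell(p,C) \ge \sum_p w'_p \min\{d^\ell(p,C), L^\ell\} \ge \sum_p \min\{d^\ell(p,C),L^\ell\} - \sum_p (1 - w'_p) L^\ell = \sum_p \min\{d^\ell(p,C),L^\ell\} - (n - (n-z')) L^\ell = g(L)$, where the middle inequality uses $\min\{d^\ell(p,C),L^\ell\} - w'_p \min\{d^\ell(p,C),L^\ell\} = (1-w'_p)\min\{\dots\} \le (1-w'_p)L^\ell$. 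Taking the infimum over $w'$ gives $\cost_{z'}(C) \ge g(L)$ for every $L$, hence for the supremum. Combining the two directions yields the identity. I do not anticipate a serious obstacle here; the only delicate point is bookkeeping with $\floor{z'}$ versus $\ceil{z'}$ and tie-breaking when several points realize the threshold distance $L_0$, which I would dispatch by the observation that a point at distance exactly $L_0$ contributes identically ($L_0^\ell$) to both sides.
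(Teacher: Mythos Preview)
Your proof is correct. The computation that $g(L_0) = \cost_{z'}(C)$ at $L_0 = a_{n-\floor{z'}}$ is essentially the same as the paper's, including the remark that ties at $L_0$ are harmless because such points contribute $L_0^\ell$ either way.

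Where you diverge is in the upper bound $\sup_{L \ge 0} g(L) \le \cost_{z'}(C)$. The paper argues this by direct comparison of $g$ at $\bar L$ and at an arbitrary $L$: it splits into the cases $L < \bar L$ and $L > \bar L$, defines the sets of points whose truncated contributions change, and uses the cardinality bounds $|\{p : d(p,C) \ge \bar L\}| > z'$ and $|\{p : d(p,C) > \bar L\}| \le z'$ to show $g(\bar L) - g(L) \ge 0$ in each case. Your argument instead exploits the fractional definition of $\cost_{z'}(C)$ as an infimum over weight vectors $w'$: for any feasible $w'$ and any $L$, the chain $\sum_p w'_p d^\ell(p,C) \ge \sum_p w'_p \min\{d^\ell(p,C),L^\ell\} \ge g(L)$ gives $g(L) \le \cost_{z'}(C)$ immediately. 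This is a weak-duality style argument and is arguably cleaner---it avoids the case split and the cardinality counting entirely, and it makes transparent why the fractional relaxation of $z'$ is the natural setting for the identity. The paper's approach, on the other hand, yields slightly more: it shows directly that $\bar L$ is a global maximizer of $g$ without first knowing the value of the maximum, which is perhaps more self-contained if one had not already introduced the fractional cost.
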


\begin{proof}
  Let $\bar L$ be the $(n-\floor{z'})$-th smallest number in the multi-set $\set{d(p, C):p \in P}$. Then it can be seen that $\cost_{z'}(C) = \sum_{p \in P}d^\ell_{\bar L}(p, C) - z'{\bar L}^\ell$.  Indeed, $\cost_{z'}(C)$ is the sum of the $n-z'$ smallest numbers in $S:=\{d^\ell(p, C): p \in P\}$. (When $n-z'$ is not an integer, then we take a fraction of the last number.) To compute the quantity on the right side, we truncate the numbers in $S$ at $\bar L^\ell$, and then take the sum of the truncated numbers minus $z'\bar L^\ell$. Since $\bar L^\ell$ is the $(n - \floor{z'})$-th smallest number in $S$, this quantity is exactly $\cost_{z'}(C)$.

  It remains to prove that $\sum_{p \in P}d^\ell_L(p,C) - z'L^\ell$ attains its maximum value at $L = {\bar L}$. First consider any $L<{\bar L}$, and define $P'=\set{p\in P| L< d(p,C)<{\bar L}}$, and $P'' = \set{p \in P: d(p, C) \geq \bar L}$. By the definition of $\bar L$, we have $|P''| \geq \floor{z'} + 1 > z'$. Then, we have
  \begin{align*}
    &\left(\sum_{p \in P}d^\ell_{\bar L}(p,C) - z'{\bar L}^\ell\right) - \left(\sum_{p \in P}d^\ell_L(p,C) - z'L^\ell\right) \\
    &=\sum_{p \in P'} (d^\ell(p, C) - L^\ell) + |P''|(\bar L^\ell - L^\ell)  - z'(\bar L^\ell - L^\ell) \geq \sum_{p\in P'}(d^\ell(p, C)-L^\ell) \geq 0.
  \end{align*}
  Now consider any $L>{\bar L}$ and define $P'=\set{p\in P: \bar L < d(p,C)< L}$ and $P'' = \set{p \in P: d(p, C) \geq L}$. By the definition of $\bar L$, we have  $|P'\cup P''| = \big|\set{p \in P: d(p, C) > \bar L}\big| \leq \floor{z'} \leq z'$. Then, we have
  \begin{align*}
    &\left(\sum_{p \in P}d_{\bar L^\ell}(p,C) - z'{\bar L}^\ell\right) - \left(\sum_{p \in P}d^\ell_L(p,C) - z'L^\ell\right) \\
   &=-\sum_{p \in P'} (d^\ell(p, C) - \bar L ^\ell ) - |P''|(L^\ell - \bar L^\ell) + z'(L^\ell - \bar L^\ell) \\
    &\geq -|P'|(L^\ell - \bar L^\ell) - |P''|(L^\ell - \bar L^\ell)  + z' (L^\ell - \bar L^\ell) \geq 0.
  \end{align*}
  This finishes the proof of the lemma. 
\end{proof}

With the above lemma, the $(k,z)$-median/means problem becomes finding a set of $k$ centers $C \subseteq \R^D$ so as to minimize $\sup_{L\geq 0}\,(\sum_{p \in P}d^\ell_L(p,C) - zL^\ell)$.  
To get a handle on the problem, we first discretize the value space for $L$.  Formally, we only allow $L$ to take values in
\begin{align*}
	\bbL := \{0\} \cup \Big(\set{(1+\epsilon)^t: t \in \Z} \cap  (\epsilon d_{\min}/(2(1+\epsilon)n), 2d_{\max}] \Big).
\end{align*}
Then, we have $|\bbL| = O\left(\frac{\log (\Delta n/\epsilon)}{\epsilon}\right)$.  We define $\cost'_{z'}(C)$ as in \eqref{eq:cost-outlier}, except that we only consider $L$ values in $\bbL$. That is, for every $z' \in [0, n)$ and a set $C$ of $k$ centers, we define
\begin{equation}\label{eq:rounded-prob}
 \cost'_{z'}(C):=\sup_{L\in \bbL}\left(\sum_{p \in P}d^\ell_L(p,C) - z'L^\ell\right).
\end{equation}

For a fixed  $z'$ and $C$, we have $\cost'_{z'}(C) \leq \cost_{z'}(C)$,  since the supreme is taken over a subset of $L$ values in the definition of $\cost'_{z'}(C)$.  
%
%
%
%
Now we show the other direction of the inequality:
\begin{mylmm}\label{lmm:R-to-bbL}
	For every set $C$ of $k$ centers, and any $z' \in [0, n]$,  we have 
  \begin{equation}
	 \cost_{(1+\epsilon)^\ell z'}(C) \leq (1+\epsilon)^\ell \cost'_{z'}(C).
  \end{equation}
\end{mylmm}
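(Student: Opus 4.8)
The plan is to reduce the continuous supremum over $L\ge 0$ that defines $\cost_{(1+\epsilon)^\ell z'}(C)$ to the discrete supremum over $L\in\bbL$ that defines $\cost'_{z'}(C)$, by rounding the optimal threshold down to the nearest power of $1+\epsilon$. Concretely, I would first invoke Lemma~\ref{lemma:reformulate} with outlier parameter $z'':=(1+\epsilon)^\ell z'$ to write $\cost_{z''}(C)=\sum_{p\in P}d^\ell_{\bar L}(p,C)-z''\bar L^\ell$, where $\bar L$ is the $(n-\floor{z''})$-th smallest value in the multiset $\set{d(p,C):p\in P}$. Since every $d(p,C)$ is, by the truncation in the definition of $d$, either $0$ or lies in $[\epsilon d_{\min}/(2n),\,2d_{\max}]$, the same holds for $\bar L$. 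I would dispose of the degenerate cases immediately: if $z''\ge n$ then the left-hand side is $0$ (discard all points) and the inequality is trivial; and if $\bar L=0$ then the reformulation gives $\cost_{z''}(C)=0$, again trivial. So I may assume $z''\in[0,n)$ (hence Lemma~\ref{lemma:reformulate} applies) and $\bar L\in[\epsilon d_{\min}/(2n),\,2d_{\max}]$.

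Next I would set $L':=(1+\epsilon)^{t}$ with $t:=\floor{\log_{1+\epsilon}\bar L}$, so that $L'\le \bar L\le(1+\epsilon)L'$. I need $L'\in\bbL$: the upper bound $L'\le\bar L\le 2d_{\max}$ is immediate, and for the lower bound I would use $L'>\bar L/(1+\epsilon)\ge \epsilon d_{\min}/(2(1+\epsilon)n)$, which is exactly the threshold in the definition of $\bbL$ — this is precisely why that threshold was given the extra $1/(1+\epsilon)$ slack. With $L'$ in hand, the two elementary facts I would establish are: (i) $d_{\bar L}(p,C)\le(1+\epsilon)\,d_{L'}(p,C)$ for every $p\in P$, checked by the three cases $d(p,C)\le L'$, $L'<d(p,C)\le\bar L$, and $d(p,C)>\bar L$, each using $\bar L\le(1+\epsilon)L'$; raising to the power $\ell$ gives $d^\ell_{\bar L}(p,C)\le(1+\epsilon)^\ell d^\ell_{L'}(p,C)$; and (ii) $\bar L^\ell\ge (L')^\ell$, since $\bar L\ge L'\ge 0$ and $\ell\in\set{1,2}$.

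Combining (i) and (ii) yields the chain
\begin{align*}
	\cost_{(1+\epsilon)^\ell z'}(C) &= \sum_{p\in P} d^\ell_{\bar L}(p,C) - (1+\epsilon)^\ell z'\,\bar L^\ell \\
	&\le (1+\epsilon)^\ell \sum_{p\in P} d^\ell_{L'}(p,C) - (1+\epsilon)^\ell z'\,(L')^\ell \\
	&= (1+\epsilon)^\ell\Big(\sum_{p\in P} d^\ell_{L'}(p,C) - z'\,(L')^\ell\Big) \\
	&\le (1+\epsilon)^\ell\,\cost'_{z'}(C),
\end{align*}
where the last inequality is just the definition \eqref{eq:rounded-prob} of $\cost'_{z'}$ together with $L'\in\bbL$. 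There is no genuinely deep step here; the only thing that requires care is the boundary bookkeeping — ensuring the rounded threshold $L'$ never slips below the smallest element of $\bbL$ (handled by the built-in $1/(1+\epsilon)$ slack), and peeling off the $\bar L=0$ and $z''\ge n$ corner cases so that Lemma~\ref{lemma:reformulate} is only invoked for $z''\in[0,n)$.
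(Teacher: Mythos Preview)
Your proposal is correct and follows essentially the same approach as the paper: invoke Lemma~\ref{lemma:reformulate} to get the optimal threshold $\bar L$, round it down to the nearest power of $1+\epsilon$ to obtain $L'\in\bbL$, and use $L'\le\bar L\le(1+\epsilon)L'$ to compare the two expressions. Your treatment is in fact slightly more careful than the paper's, since you explicitly dispose of the corner cases $\bar L=0$ and $(1+\epsilon)^\ell z'\ge n$ before invoking Lemma~\ref{lemma:reformulate}, whereas the paper glosses over these.
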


\begin{proof}
	By Lemma~\ref{lemma:reformulate}, we have that $\cost_{(1+\epsilon)^\ell {z'}}(C) = \sup_{L\geq 0} \left(\sum_{p \in P}d^\ell_L(p,C) - (1+\epsilon)^\ell {z'}L^\ell\right)$.  Let $\bar L$ be the $L \in \R$ that achieves the maximum value. 
	Thus, $\cost_{(1+\epsilon)^\ell {z'}}(C) = \sum_{p \in P}d^\ell_{\bar L}(p,C) - (1+\epsilon)^\ell {z'}\bar L^\ell$. 
	By Lemma~\ref{lemma:reformulate} and the new definition of the metric $d$, we have $\bar L = 0$ or $\bar L \in [\epsilon d_{\min}/(2n), 2d_{\max}]$. Thus there is always a $L' \in \bbL$ such that $\bar L \in [L', (1+\epsilon)L')$. 
	\begin{align*}
	   \cost_{(1+\epsilon)^\ell {z'}}(C) &= \sum_{p \in P}d^\ell_{\bar L}(p,C) - (1+\epsilon)^\ell {z'}\bar L^\ell\\
	    &\leq(1+\epsilon)^\ell\sum_{p \in P}d^\ell_{L'}(p,C)-(1+\epsilon)^\ell {z'}L'^\ell\leq (1+\epsilon)^\ell\cost'_{{z'}}(C).
	 \end{align*}
	 The first inequality is by $L' \leq \bar L < (1+\epsilon)L'$ and the second inequality is by the definition of $\cost'_{z'}(C)$ and the fact that $L' \in \bbL$.	 
\end{proof}
The lemma allows us to focus on the new objective function $\cost'_{\tilde z}(C)$ for some suitably defined $\tilde z$.



\subsection{Distributed Algorithm for the Reformulated Problem via $\epsilon$-Coresets}

An important notion that has been used to design efficient algorithms for $k$-median/means in Euclidean space is the $\epsilon$-coreset. Roughly speaking, it is a weighted set of points that approximates the given set $P$ well.  Formally, 
\begin{mydef}
 A weighted set $(Q,w)$ of points is an $\epsilon$-coreset for $P'$ w.r.t. distance $d'$, if for every set $C \subseteq \R^D$ of $k$ centers, we have
  \begin{equation*}
    \left(\sum_{q\in Q}w_qd'^\ell(q, C)\right) \Big/ \left(\sum_{p\in P'}d'^\ell(p, C)\right)\in [1-\epsilon, 1 + \epsilon].
  \end{equation*}
\end{mydef}

The following theorem from \cite{DBLP:conf/nips/BalcanEL13} gives a distributed algorithm to construct $\epsilon$-coresets for the points $P$ and a truncated metric $d_L$:
\begin{mythm} \cite{DBLP:conf/nips/BalcanEL13}
	\label{thm:coreset}
	Given $\delta > 0, \epsilon > 0, L \geq 0$, there is an $2$-round distributed algorithm that outputs an $\epsilon$-coreset $(Q, w)$ of $P$ w.r.t distance $d^L$, with probability at least $1-\delta$. The size of the coreset is at most $\Phi$, where $\Phi = O\left(\frac{1}{\epsilon^2}(kD + \log\frac1\delta) + mk\right)$ for $k$-median, and $\Phi = O\left(\frac{1}{\epsilon^4}(kD + \log\frac1\delta)+mk\log\frac{mk}{\delta}\right)$ for $k$-means. The communication complexity of the algorithm is $O(D\Phi)$.
\end{mythm}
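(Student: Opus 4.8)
The plan is to reconstruct the distributed coreset framework of \cite{DBLP:conf/nips/BalcanEL13}: build local constant-factor solutions on each machine, then glue them together by sensitivity-based importance sampling. First I would have each machine $i$ run a centralized $O(1)$-approximation for $k$-median (resp.\ $k$-means) on its own data $(P_i, d_L)$; note $d_L$ is still a metric and all distances are bounded by $L$, so the standard centralized algorithms apply verbatim. Let $B_i$ be the returned set of $O(k)$ centers and $c_i := \sum_{p \in P_i} d_L^\ell(p, B_i)$ its cost. Machine $i$ sends $B_i$ (which is $O(Dk)$ words) and the scalar $c_i$ to the coordinator; set $B := \bigcup_i B_i$, so $|B| = O(mk)$. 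The key structural fact is that, if $B^*$ is an optimal $k$-center set for $(P, d_L)$, then $\sum_i c_i \le O(1) \cdot \sum_i \cost(P_i, B^*) = O(1)\cdot \cost(P, B^*)$, while $\cost(P, B) \le \sum_i c_i$; hence $B$ is an $O(1)$-approximate (over-sized) solution for the whole instance, which is all the sampling framework needs.

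Next I would set up the sampling distribution. For $p \in P_i$ define an upper bound on its Feldman--Langberg sensitivity, $s(p) := \frac{d_L^\ell(p, B_i)}{c_i} + \frac{1}{|P_i|}$, so that $t\cdot s(p)$ (with $t := \sum_i c_i$) dominates $\sup_{C}\frac{d_L^\ell(p,C)}{\sum_{q\in P}d_L^\ell(q,C)}$ up to a constant --- this uses $B \supseteq B_i$ to lower-bound the denominator near $p$ by a constant times $c_i$. A standard calculation gives $\sum_{p\in P}s(p) = O(m)$. The coordinator then allocates to machine $i$ a sample budget $n_i' \propto c_i$ out of a total of $N := O\!\left(\frac{1}{\epsilon^2}(\kappa + \log\frac1\delta)\right)$ samples, where $\kappa$ is the pseudo-dimension of the cost family $\cF := \{p \mapsto d_L^\ell(p, C): C \in (\R^D)^k\}$; machine $i$ draws $n_i'$ i.i.d.\ points with probability $\propto d_L^\ell(\cdot, B_i)$, assigns each a weight inversely proportional to its sampling probability, and additionally emits, for each $b \in B_i$, one synthetic point located at $b$ whose weight absorbs the $\frac{1}{|P_i|}$ term (the usual trick to keep the estimator unbiased for the ``uniform'' part of the sensitivity). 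All machines ship their at most $n_i' + O(k)$ weighted points to the coordinator; the union $(Q,w)$ has size $N + O(mk) = O(\Phi)$ and costs $O(D\Phi)$ words over $2$ rounds (round one carries the $c_i$'s up, round two carries the samples back).

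Finally I would invoke the Feldman--Langberg sampling theorem: drawing $N = O(\epsilon^{-2}(\kappa + \log\frac1\delta))$ points according to these sensitivity bounds, with inverse-probability weights, yields an $\epsilon$-coreset of $(P, d_L)$ with probability $1-\delta$. The hard part --- and the step that produces the two different values of $\Phi$ --- is bounding $\kappa$ and, for $k$-means, controlling the variance of the squared-distance estimator. Truncation at $L$ is a monotone transformation of distance and does not increase the VC/pseudo-dimension, which stays $O(kD)$; but for $k$-means the cost is the \emph{square} of a truncated distance, widening the dynamic range of the functions in $\cF$ and forcing $N = O(\epsilon^{-4}(kD + \log\frac1\delta))$, together with an extra $\log\frac{mk}{\delta}$ factor on the additive $mk$ term (needed for a union bound over machines in the squared-cost concentration argument). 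Plugging these into $N + O(mk)$ gives exactly the stated sizes of $\Phi$ for the two objectives, and the communication bound $O(D\Phi)$ follows since each of the $O(\Phi)$ weighted points is described by $O(D)$ words; everything outside the pseudo-dimension and variance bounds is bookkeeping.
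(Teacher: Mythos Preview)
The paper does not prove this theorem. It is quoted verbatim as a result of \cite{DBLP:conf/nips/BalcanEL13}, and the only argument the paper supplies is the one-line remark that replacing $\|\cdot\|_2$ by the truncated metric $d_L$ is harmless because truncation changes the pseudo-dimension of the relevant function class by at most an additive constant, so the sample-size bound $\Phi$ from \cite{DBLP:conf/nips/BalcanEL13} carries over unchanged. Everything you wrote is therefore not a comparison against the paper's proof but a reconstruction of the cited result itself.

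As a reconstruction, your outline is in the right spirit (local $O(1)$-approximations, union of local centers as an oversized global bicriteria solution, sensitivity/importance sampling with the centers kept deterministically), but a couple of details would not survive a careful check. First, your sensitivity surrogate $s(p)=\frac{d_L^\ell(p,B_i)}{c_i}+\frac{1}{|P_i|}$ divides by the \emph{local} cost $c_i$, which makes $\sum_p s(p)=\Theta(m)$; if that were the operative bound, the Feldman--Langberg sample size would pick up an extra factor $m$ and you would not get the additive $mk$ separation in $\Phi$. The actual argument in \cite{DBLP:conf/nips/BalcanEL13} normalizes by the \emph{global} cost $\sum_j c_j$ and uses the size of the local cluster containing $p$ (not $|P_i|$), so that total sensitivity is $O(1)$ and the $mk$ term arises solely from shipping the deterministic centers. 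Second, the $\epsilon^{-4}$ and the extra $\log\frac{mk}{\delta}$ in the $k$-means bound are not consequences of ``widened dynamic range'' of squared distances; they come from the specific way \cite{DBLP:conf/nips/BalcanEL13} handles the weighted-center contributions and the per-machine concentration for squared costs. If you want to reproduce the exact constants in $\Phi$, those pieces need to be traced through that paper rather than improvised.
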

The correspondent theorem in \cite{DBLP:conf/nips/BalcanEL13} only considers the original Euclidean metric $\|\cdot-\cdot\|_2$. In our definition of $d_L$, we truncated distances below at $\epsilon \cdot d_{\min}/(2n)$, and then above at $L$.  But it is easy to extend their theorem so that it works for the truncated metrics, since all we need is that the metric has $O(D)$ ``pseudo-dimension'' (defined in \cite{DBLP:conf/nips/BalcanEL13}). Truncating the metric only change the pseudo-dimension by an additive constant.  From now on, let $\Phi$ be the upper bound on the size of the $\epsilon$-coreset in Theorem~\ref{thm:coreset}.




With Theorem~\ref{thm:coreset} in hand, it is straightforward to give our algorithm for $(k, z)$-median/means. 
For all $L \in \bbL$, we run in parallel the 2-round distributed algorithm in Theorem~\ref{thm:coreset} with $\delta$ scaled down by a factor of $|\bbL|$ to obtain a $\epsilon$-coreset $(Q_L, w_L)$.   The communication cost of the algorithm is then $\Phi D \cdot \frac{\log(n\Delta/\epsilon)}{\epsilon}$. 

Let $\tilde z = \frac{(1+\epsilon)^2z}{1-\epsilon}$.  We would like to find a set $\tilde C$ of $k$ points that minimizes $\sup_{L \in \bbL}\left(\frac{1}{1-\epsilon}\sum_{q \in Q_L}w_qd^\ell_L(q, \tilde C) - \tilde zL^\ell\right)$. However, it is not even clear whether the optimum $\tilde C$ can be represented using finite number of bits or not. Instead, the coordinator will output a set $\tilde C \subseteq \R^D$ of $k$ centers such that for every set $C^*\subseteq \R^D$ of $k$ centers, we have 
\begin{align}
	\sup_{L\in \bbL}\left(\frac{1}{1-\epsilon}\sum_{q \in Q_L}w_qd^\ell_L(q, {\tilde C}) - \tilde zL^\ell\right) \leq \sup_{L\in \bbL}\left(\frac{1+\epsilon}{1-\epsilon}\sum_{q \in Q_L}w_qd^\ell_L(q, C^*) - \tilde zL^\ell\right). \label{inequ:tilde-C-is-approximate}
\end{align}
The extra $(1+\epsilon)$ factor on the right-side allows us to partition the Euclidean space into finite number of cells.  
This can be done by partitioning the space into $O\left(\frac{\log (\Delta n/\epsilon)}\epsilon\right)^{|\bbL|\Phi}$ cells so that all points in a cell have similar respective distances to all points in $\union_{L \in \bbL}Q_L$. So, we can choose an arbitrary representative point from each cell, and then enumerate all sets $\tilde C$ of $k$ representatives and output the one with the minimum $\sup_{L\in \bbL}\left(\frac{1}{1-\epsilon}\sum_{q \in Q_L}w_qd^\ell_L(q, {\tilde C}) - \tilde zL^\ell\right)$. The running time of the algorithm can be bounded by $\exp\left(\Phi, k, |\bbL|, D, \log\left(\frac{\log (n\Delta/\epsilon)}{\epsilon}\right)\right) = \exp\left(\poly\left(\frac1\epsilon, k, D, m, \log \frac1\delta, \log \Delta\right)\right)$.

%
%
%
%
%

\subsection{Analysis of the algorithm}
	We now show that the algorithm gives a $(1+O(\epsilon), 1+O(\epsilon))$-approximation algorithm to the $(k, z)$-median/means problem. 
	With probability at least $1-\delta$, for every $L$, the weighted set $(Q_L, w_L)$ is an $\epsilon$-corset for $P$ w.r.t metric $d_L$.  Let $C^*$ be the optimal set of centers for the original $(k,z)$-median/means problem. Then, for every $z' \in [0, n]$, we have
  \begin{alignat*}{2}
    &\quad \cost'_{\tilde z}({\tilde C}) \\
    &=\  \sup_{L\in \bbL} \left(\sum_{p \in P}d^\ell_L(p, {\tilde C}) - \tilde zL^\ell\right)  &
    &\ \leq\  \sup_{L\in \bbL} \left(\frac{1}{1-\epsilon}\sum_{q \in Q_L}w_q d^\ell_L(q, {\tilde C}) - \tilde zL^\ell\right) \\
    &\leq\ \sup_{L\in \bbL} \left(\frac{1+\epsilon}{1-\epsilon}\sum_{q \in Q_L}w_q d^\ell_L(q, C^*) - \tilde zL^\ell\right)  &
    &\ \leq\ \sup_{L\in \bbL}\left(\frac{(1+\epsilon)^2}{1-\epsilon}\sum_{p \in P}d^\ell_L(p, C^*) - \tilde zL^\ell\right) \\
    &=\ \frac{(1+\epsilon)^2}{1-\epsilon}\sup_{L\in \bbL}\left(\sum_{p \in P}d^\ell_L(p, C^*) - zL^\ell\right) &
    &\ =\ \frac{(1+\epsilon)^2}{1-\epsilon}\cost'_{z}(C^*).
  \end{alignat*}
  The first and the third inequalities are by the definition of $\epsilon$-coreset, while the second inequality is by \eqref{inequ:tilde-C-is-approximate}.
Then with Lemma~\ref{lmm:R-to-bbL}, we know that
  \begin{align*}
   \cost_{\frac{(1+\epsilon)^{\ell+2}}{1-\epsilon}z} (\tilde C) &= \cost_{(1+\epsilon)^\ell\tilde z}({\tilde C}) \leq (1+\epsilon)^\ell\cost'_{\tilde z}({\tilde C})\\
   & \leq \frac{(1+\epsilon)^{\ell + 2}}{1-\epsilon}\cost'_z(C^*) \leq \frac{(1+\epsilon)^{\ell + 2}}{1-\epsilon}\cost_z(C^*).
  \end{align*}
  So, ${\tilde C}$ is a $\left(\frac{(1+\epsilon)^{\ell+2}}{1-\epsilon},\frac{(1+\epsilon)^{\ell+2}}{1-\epsilon}\right) = (1+O(\epsilon), 1+O(\epsilon))$-approximate solution.  We can scale down the input $\epsilon$ by a constant factor to obtain a $(1+\epsilon, 1+\epsilon)$-approximation. 
  
  As we mentioned, the running time of the algorithm for the central coordinator is exponential in $\frac1\epsilon, k, D, m, \log \frac1\delta$ and $\log \Delta$. For each machine $i$, the running time in the algorithm of \cite{DBLP:conf/nips/BalcanEL13} is dominated by the time to compute an $O(1)$-approximation for the $k$-median/$k$-means problem for $P_i$, which is polynomial in $n_i$ and $D$.
	\section{Complete Experiment Results}\label{sec:experiments}
\subsection{$k$-Center Clustering with Outliers}\label{subsec:k-center-exp}
We evaluate the performance of our $(k, z)$-center algorithm (Algorithm~3) on several real-world datasets, which are summarized in Table~\ref{tab:datasets-info}. In the experiments we compare $\distkzcmulti$ with many other $k$-center methods, including two centralized methods (\seqgr\cite{DBLP:journals/mor/HochbaumS85} and $\seqkzc$ \cite{DBLP:conf/soda/CharikarKMN01} and four distributed methods (\distrr, \distrc, \MKCWM \cite{DBLP:conf/nips/MalkomesKCWM15}, and \GLZ \cite{DBLP:conf/spaa/GuhaLZ17}). The \seqgr method has a 2-approximation ratio in the no-outlier scenario, but doesn't take outliers into account.
The \distrr and \distrc methods serves as two baselines: \distrr randomly sample $k+z$ points on each machine, then further randomly choose $k$ points from the total $m(k+z)$ sampled points as final cluster centers; \distrc is similar to \distrr, except that it chooses the final $k$ centers by the $\seqkzc$ method. The \MKCWM and \GLZ are the state-of-art distributed $k$-center algorithms that handle outliers. For each parameter setting the experiment is repeated for 5 runs and the average result is reported. Note the three distributed baseline methods \distrr, \distrc, and \MKCWM all have the same communication cost $md(k+z)$, while \GLZ's communication cost is $\tldO(mk+m/\epsilon)$. All methods are implemented in Python and the experiments are conducted on a 2-core 2.7 GHz Intel Core i5 laptop.


\begin{table}[!htbp]
  \small
  \centering
    \begin{tabular}{rlcc}
      \addlinespace
      \toprule
      \textbf{Name} & \textbf{Size: }$n$ & \textbf{Dimension: }$B$\\
      \midrule
      \textit{spambase}\protect\footnotemark & 4,601  & 57 \\
      \textit{parkinsons}\protect\footnotemark & 5,875 & 16 \\
      \textit{pendigits}\textsuperscript{\ref{fn:uci}} & 10,992  & 16\\
      \textit{letter}\textsuperscript{\ref{fn:uci}}  & 20,000 & 16 \\
      \textit{skin}\textsuperscript{\ref{fn:uci}} & 245,057 & 3 \\
      \textit{covertype}\textsuperscript{\ref{fn:uci}} & 581,012 & 10 \\
      \textit{gas}\textsuperscript{\ref{fn:uci}} & 928,991 & 10 \\
      \textit{power}\textsuperscript{\ref{fn:uci}} & 2,049,280 & 7 \\ 
      \bottomrule
    \end{tabular}
    \caption{Clustering datasets used for evaluation}
    \label{tab:datasets-info}
\end{table}
\addtocounter{footnote}{-1}
\footnotetext{\label{fn:uci} The UCI data repository \cite{Lichman:2013}}
\addtocounter{footnote}{1}
\footnotetext{\cite{DBLP:conf/icassp/tsanaslmr10}}
The experiments consist of two parts: In the first part we compare our algorithms with the two centralized methods. This part is conducted only for the 4 smaller datasets (\textit{spambase}, \textit{parkinsons}, \textit{pendigits}, and \textit{letter}), on which centralized methods can finish in an acceptable time. In the second part we compare our algorithms with other distributed methods on the 4 larger datasets (\textit{skin}, \textit{covertype}, \textit{gas}, and \textit{power}). 

\noindent\textbf{Distributed v.s. Centralized: } Figure~\ref{fig:seq-vs-dist-var-z} and Figure~\ref{fig:seq-vs-dist-var-k} show the results on the four smaller datasets. Figure~\ref{fig:seq-vs-dist-var-z} demonstrates how the objective value and communication cost change with $z$ when $k$ is fixed to $20$. Our algorithm $\distkzcmulti$ always achieve comparable objective with other distributed baselines. On datasets \textit{spambase} and \textit{parkinsons}, the objective value even matches the best centralized method ($\seqkzc$). When it comes to communication cost, $\distkzcmulti$ shows a clear advantage over \distrr, \distrc, and \MKCWM, which matches our theoretical results.

Figure~\ref{fig:seq-vs-dist-var-k} depicts the performance with respect to different value of $k$ when $z$ is fixed to $256$. $\distkzcmulti$ still achieves similar (or better) objective values among all distributed methods. But we can see that when $k$ increases, the communication cost of $\distkzcmulti$ ($\epsilon=0.1$) approaches those of other distributed methods. Recall that the communication cost of $\distkzcmulti$ is $\tldO(mk/\epsilon)$ which can be similar to $O(m(k+z))$ when $z$ and $k/\epsilon$ are in the same order. If we choose a large value of $\epsilon=0.99$, the communication cost of $\distkzcmulti$ becomes much stable, while the objective value is only slightly worse. This suggests that in practice we can choose a relatively large $\epsilon$ to obtain small communication cost.

We want to remind the readers that the approximation ratio of $\distkzcmulti$ holds for removing $(1+\epsilon)z$ outliers, while in the experiments the objective is computed by removing only $z$ outliers. This indicates that $\distkzcmulti$ may have better performance than what is predicted theoretically.

\begin{figure}[!htb]
  \centering
    \includegraphics[width=\textwidth]{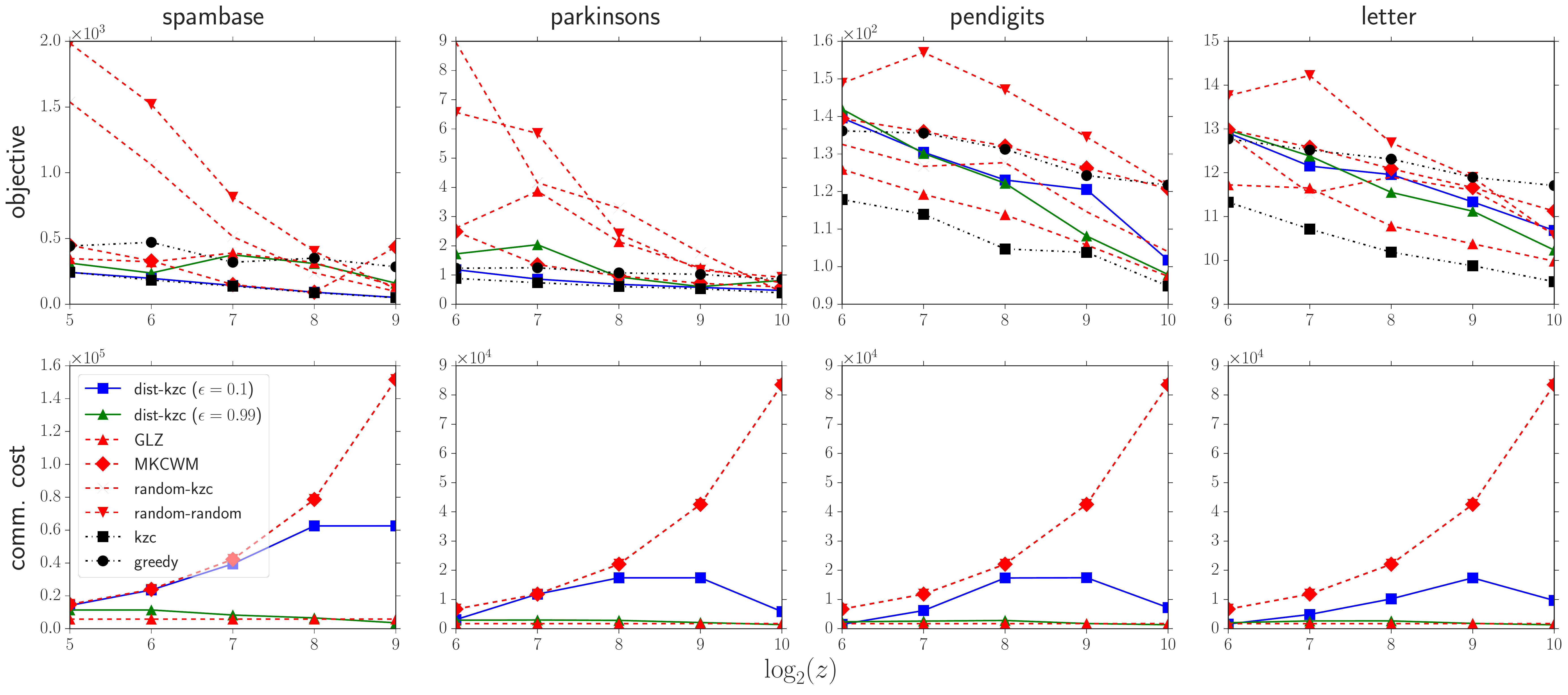}
  \caption{Centralized vs. Distributed, with varying $z$ and fixed $k=20, m=5$.}
  \label{fig:seq-vs-dist-var-z}
\end{figure}

\begin{figure}[!htb]
  \centering
    \includegraphics[width=\textwidth]{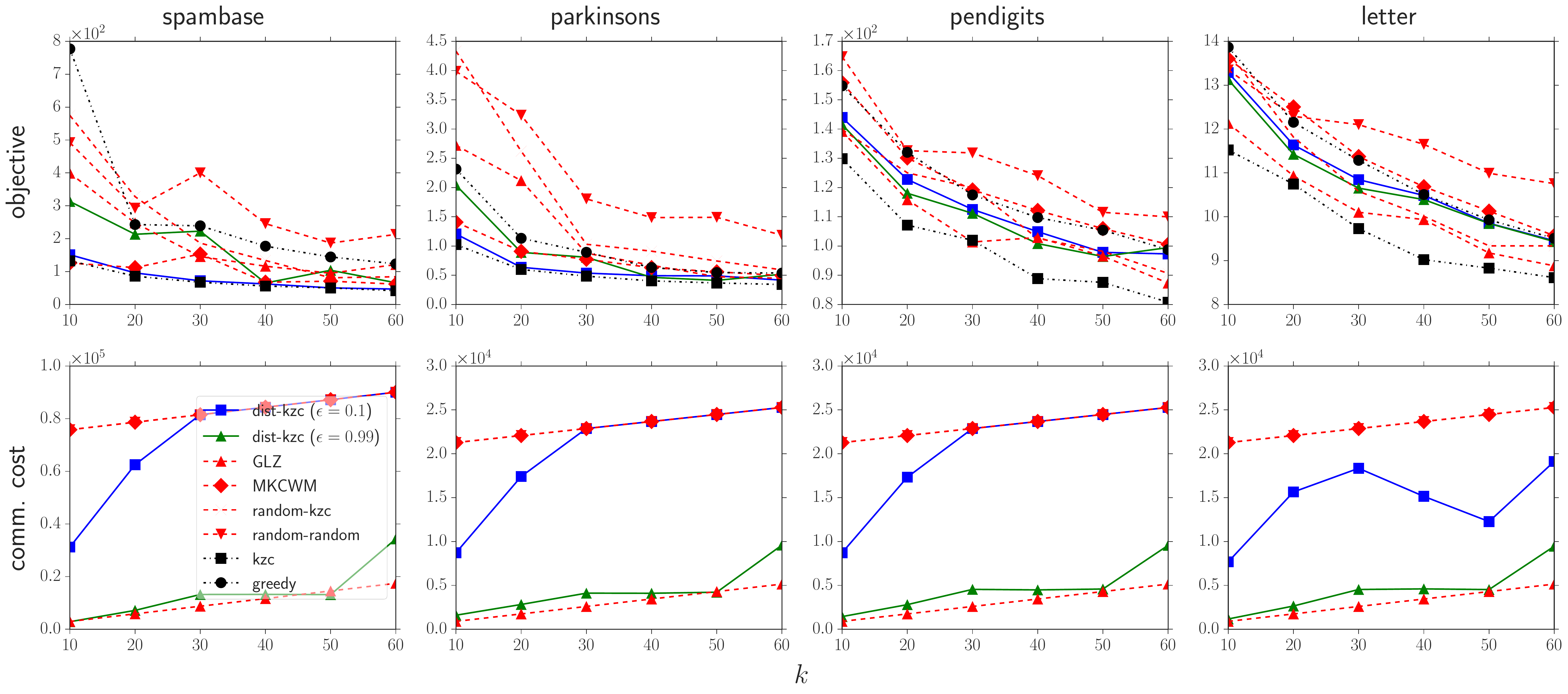}
  \caption{Centralized vs. Distributed,  with varying $k$ and fixed $z=256, m=5$.}
  \label{fig:seq-vs-dist-var-k}
\end{figure}

\noindent\textbf{Large scale: }This part contains experiment results on the four large datasets: \textit{skin}, \textit{covertype}, \textit{gas}, and \textit{power}. The \GLZ\ method needs solving many local $(k,z')$-center instances, which is too slow to finish on these large datasets. Hence here we use its variant provided by \cite{DBLP:conf/spaa/GuhaLZ17}, denoted as \GLZz. \GLZz works similar as \GLZ, but avoids solving $(k,z')$-center locally on each machine by transmitting $\tldO(mk+z)$ data to the coordinator. So \GLZz has a higher communication cost than \GLZ, but it's still much better than \MKCWM which has a $O(m(k+z))$ communication cost.

Similar to the previous part, Figure~\ref{fig:large-scale-var-z} and Figure~\ref{fig:large-scale-var-k} show results for varying $z$ and $k$. Our method still achieves comparable objective value with the best distributed baselines. The communication cost of our algorithm is always much smaller than \MKCWM, and matches that of \GLZz. This advantage is more obvious with bigger $z$, but here to make all the baselines terminate in acceptable times we only use $z\sim\sqrt{n}$.

\begin{figure}[!htb]
  \centering
    \includegraphics[width=\textwidth]{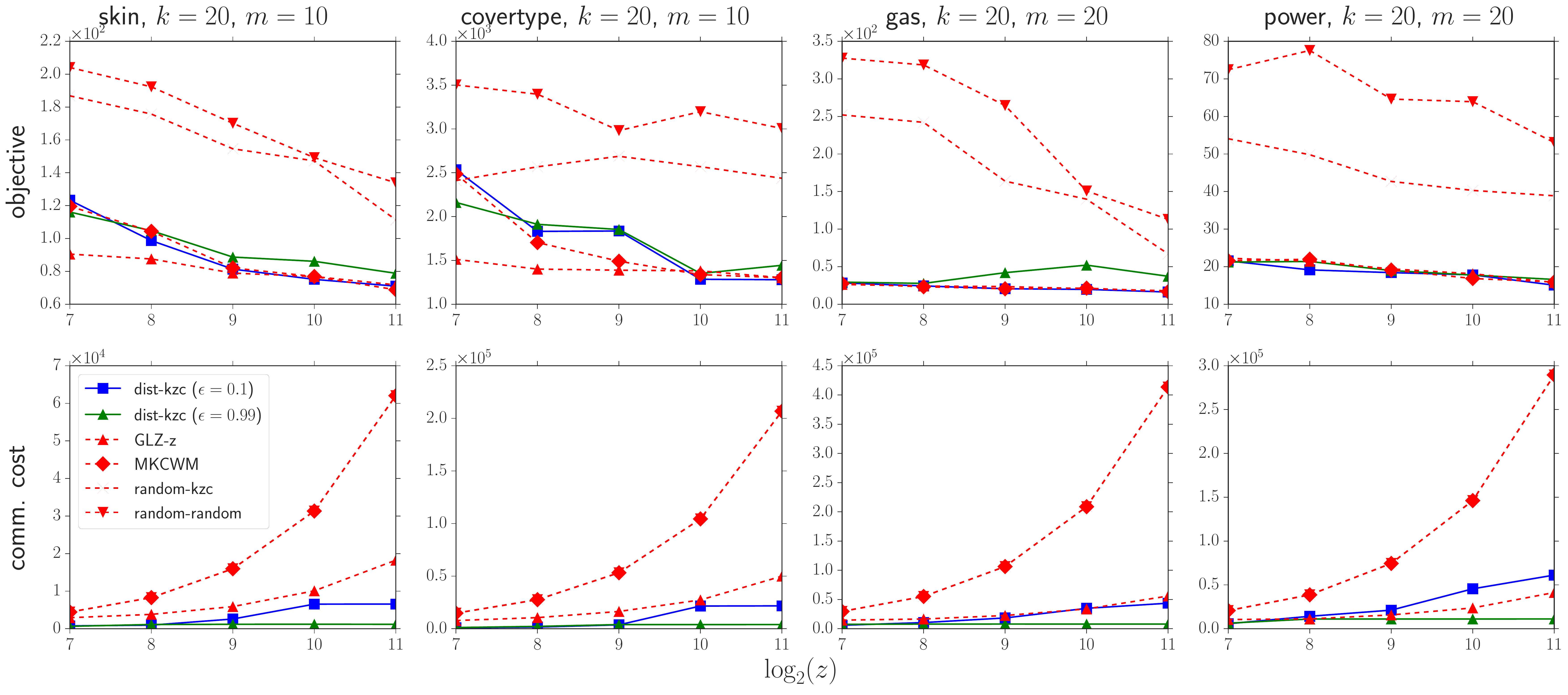}
  \caption{Large scale, with varying $z$}
  \label{fig:large-scale-var-z}
\end{figure}

\begin{figure}[!htb]
  \centering
    \includegraphics[width=\textwidth]{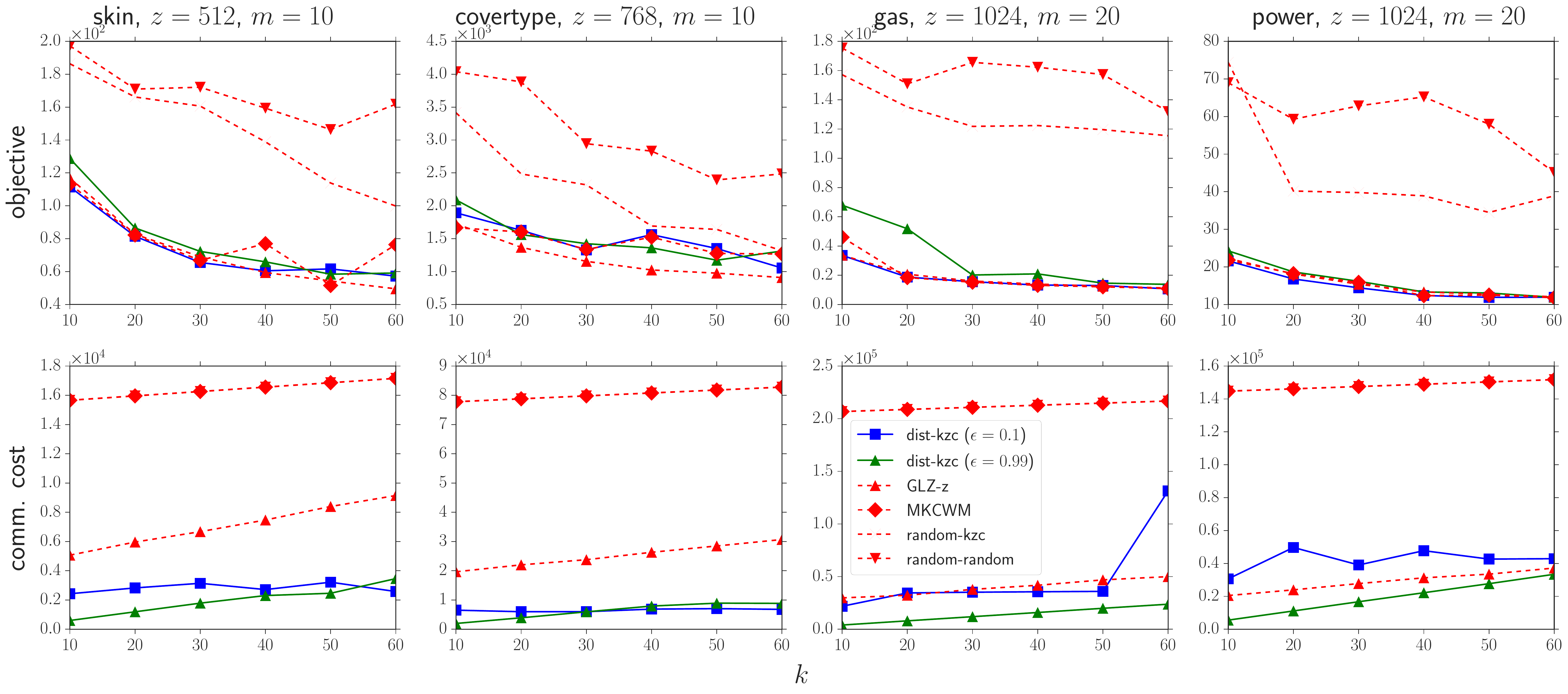}
  \caption{Large scale, with varying $k$}
  \label{fig:large-scale-var-k}
\end{figure}


        \subsection{$k$-Means Clustering with Outliers}
\noindent\textbf{The centralized solver:} We test our distributed $k$-means algorithm proposed in section~\ref{sec:app-kzmedian}. As we described, the algorithm requires solving a \emph{min-max} clustering problem on the coordinator. Formally, given a set of datasets $Q_1,\ldots,Q_M$, each equipped with its own metric $d_1,\ldots,d_M$, the goal is to find a center set $C$ minimizing the maximum cost over all $M$ datasets:
\begin{equation}
  \min_{C:|C|=k}\max_{i\in[M]}\sum_{p\in Q_i} d^l_i(p, Q_i)
\end{equation}
where $l=1, 2$ corresponding to the $k$-median or $k$-means objective respectively. 

Although we don't know any practical algorithm for such min-max clustering problem, there exists some results addressing a simpler form of the min-max $k$-median problem: Suppose there're only $N$ possible locations for selecting the center set $F$ (i.e., $C\subset V$ for some $|V|=N$), and every dataset $Q_i$ has the same metric $d$, then Anthony~\etal\cite{DBLP:journals/mor/AnthonyGGN10} shows that a simple reverse-greedy method achieves $O(\log N+\log M)$-approximation for the min-max $k$-median problem in this special case. We adapt their method to solve our min-max $k$-means problem in the experiment. For completeness, the algorithm is listed below:
\begin{algorithm}[H]
  \caption{\rgreedy$(k, \{(Q_i,d_i)\}_{i=1}^M, B)$\cite{DBLP:journals/mor/AnthonyGGN10}} \label{alg:anthony}
  \begin{algorithmic}[1]
    \STATE $C^1\gets \bigcup_{i=1}^MQ_i, w^1_i\longleftarrow 1$ for all $i\in[M]$;
    \STATE \textbf{for} {$t \gets 1$ to $N-k$} \textbf{do}
    \STATE \hspace*{\algorithmicindent} For every $v\in F^t$ and $i\in[M]$, let $\delta^t_i(v)\gets\sum_{p\in Q_i}(d^2_i(p, C^t\setminus\{v\})-d^2_i(p, C^t))$
    \STATE \hspace*{\algorithmicindent} $\hat{C}^t \gets \{v\in C^t|\forall i\in[M], \delta^t_i(v)\leq B/2\}$\;
    \STATE \hspace*{\algorithmicindent} $v^t\gets\argmin_{v\in \hat{C}^t}\sum_{i=1}^Mw^t_i\cdot\delta^t_i(v)$ \; 
    \STATE \hspace*{\algorithmicindent} For all $i\in[M]$, let $w^{t+1}_i\gets w^t_i\left(1+\frac{1}{B}\right)^{\delta^t_i(v^t)}$\;
    \STATE \textbf{return} $C^{N-k+1}$
  \end{algorithmic}
\end{algorithm}
Roughly speaking, the algorithm starts with $C$ being the set of all points, and iteratively remove points in $C$ until it shrinks to size $k$. In each iteration the algorithm removes from $C$ the point that incurs the least weighted total cost increase. However, because our problem is more general than that in \cite{DBLP:journals/mor/AnthonyGGN10}, we don't know whether their approximation guarantee for Algorithm~\ref{alg:anthony} still holds here. 

\noindent\textbf{Algorithms:} We compare our implementation with some other algorithms for the $k$-means/$(k,z)$-means problem, including two centralized ones and two distributed ones: \kmeans\cite{Lloyd06}, the classical Lloyd's algorithm; \kmeansmm\cite{DBLP:conf/sdm/ChawlaG13}, like \kmeans, but uses some heuristics to handle outliers; \BEL\cite{DBLP:conf/nips/BalcanEL13}, the distributed $k$-means algorithm based on coreset; and \CAZ\cite{DBLP:journals/corr/abs-1805-09495}, a recently proposed distributed $(k,z)$-means algorithm. The \BEL and \CAZ algorithms both belong to the two-level clustering framework\cite{DBLP:journals/tkde/GuhaMMMO03}: first construct a local summary on each machine and aggregate them on the coordinator, then the coordinator conduct a centralized clustering over the aggregated summaries to get the final result. But the main focus of \BEL and \CAZ is how to construct local summary, and they don't specify the actual coordinator solver used. In the experiment we use \kmeans and \kmeansmm as the centralized solver for \BEL and \CAZ respectively. All methods are implemented in Python and the experiments are conducted on a 2-core 2.7 GHz Intel Core i5 laptop.

\noindent\textbf{Datasets:} The experiment is conducted on one synthesized dataset and three real-world datasets. The real-world datasets are \textit{spambase, parkinsons}, and \textit{pendigits} (see Table~\ref{tab:datasets-info}). Unlike the $k$-center case, the outliers in the original dataset are unable to significantly affect the objective value. Thus to make the algorithm's effect clearer, we manually add 500 outlier points to each of the three dataset. The synthesized dataset is sampled from a mixture of Gaussian model, of which the parameters are also randomly generated; specifically, we sample 10000 points  in total from 4 different Gaussian distributions in $\R^5$, and manually add another 500 outliers to the dataset.

\noindent\textbf{Parameter setting:} For each dataset, we fix $k$ and vary $z$. On the three real-world datasets, $k$ is set to be $10$ and $z$ varies from $2^5$ to $2^{11}$; on the synthesized dataset, $k$ is set to be $4$ and $z$ ranges from $2^6$ to $2^{10}$. The number of machines are fixed to $5$ for all 4 datasets. Throughout the experiment, we use $\epsilon=0.3$ as the error parameter for our algorithm. We measure how the objective value and communication cost (for distributed methods only) changes with $z$. But different from the setting in Section~\ref{subsec:k-center-exp}, here we compute the cost of our method by removing $(1+\epsilon)z$ outliers to match our theory result. (In this sense, the comparison is ``more fair'' for us than in Section~\ref{subsec:k-center-exp}) 

Another issue in applying our $(k,z)$-means algorithm is the choice of appropriate coreset size. Unlike the result for our $(k,z)$-center algorithm, we only have an asymptotic estimation for the coreset size, which is not so instructive in practice. Therefore, in the experiment we hand-pick the coreset size by some heuristics: when the value of the error parameter $\epsilon$ is given, we can compute the total number of different threshold distance that will be tried (i.e., $|\bbL|$). Then we choose the coreset size to be $\max\left\{10k,\frac{n}{10m|\bbL|}\right\}$. So each coreset contains at least $10k$ samples, and when $n\gg km|\bbL|$, we allow the total size to be as large as $n/10$.
\begin{figure}[!htb]
  \centering
    \includegraphics[width=\textwidth]{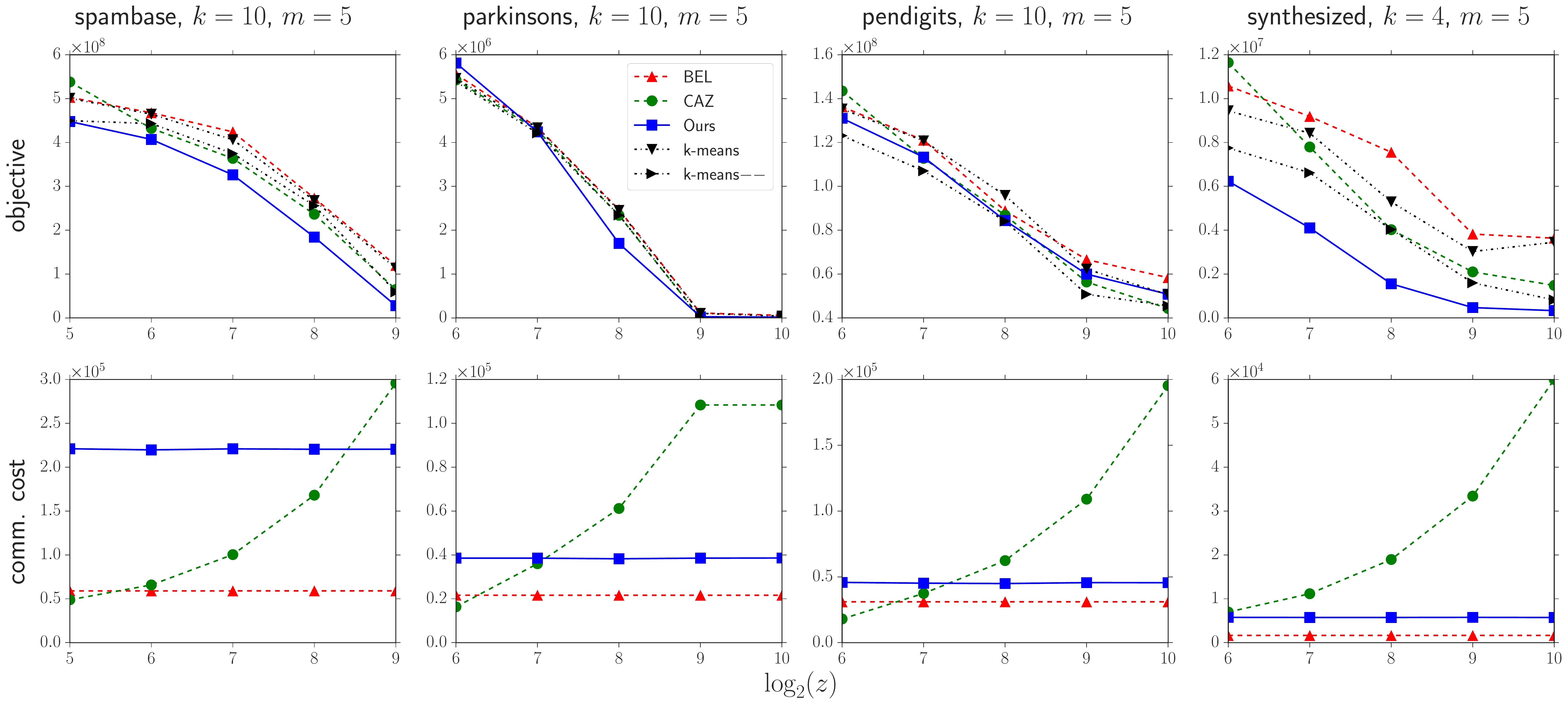}
  \caption{Comparison of the our distributed $(k,z)$-means implementation with other distributed/centralized methods. The first row is objective value, and the second is communication cost.}
  \label{fig:kzmeans-results}
\end{figure}

\noindent\textbf{Experiment results:} Figure~\ref{fig:kzmeans-results} shows the experiment result: we can see that our algorithm performs surprisingly well in terms of objective value, often achieving the lowest cost among all the methods. The effect of outliers is most clearly revealed on the synthesized data, where \BEL and \kmeans perform significantly worse than others. In particular, although we remove $\epsilon z$ more outliers when calculating the cost for our method, it's still much better than \BEL even if compared at different $z$: consider our method's cost at $z=(1+\epsilon)2^7=1.3\cdot2^7$ with \BEL's at $z=2^8$. 

The communication cost of our method doesn't change with $z$, since the way we decide the coreset size makes it fixed. \BEL's communication cost is also not affected by $z$, as it doesn't deal with outliers. In contrast, \CAZ's communication is in the order of $O(mk\log n + z)$, which is reflected in the figure as it grows linearly in $z$. Although our centralized solver uses some heuristics and thus doesn't have provable guarantees, the experiment results suggest that our coresets construction indeed preserves the outliers information while being independent of $z$.

\end{document}